\documentclass[12pt]{article}
\usepackage{amsmath,amssymb,fullpage}

\usepackage{enumerate}

\usepackage{graphicx}

\usepackage{makeidx}
\usepackage[utf8]{inputenc}
\usepackage{wrapfig}
\usepackage{amsmath}
\usepackage{amsfonts}
\usepackage{mathrsfs}
\usepackage{amssymb}
\usepackage{latexsym}
\usepackage{amsbsy}
\usepackage{color}
\usepackage{bbm}

\usepackage{mathrsfs}


\usepackage{wasysym}

\providecommand{\otherindexspace}[1]{}

\usepackage{amsthm}

\newtheorem{theorem}{Theorem}[section]

\newtheorem{lemma}[theorem]{Lemma}
\newtheorem{proposition}[theorem]{Proposition}
\newtheorem{remark}[theorem]{Remark}

\newtheorem{definition}[theorem]{Definition}

\newtheorem{example}[theorem]{Example}
\newtheorem{assumption}[theorem]{Assumption}

\numberwithin{equation}{section}


\def\cal#1{\mathcal{#1}}

\def \R{\mathbb {R}}

\usepackage{fancyhdr}



\pagestyle{fancy}

\pagestyle{plain}
\usepackage{graphics}
\usepackage{graphicx}

\DeclareGraphicsExtensions{.eps,.bmp,.jpg,.pdf,.mps,.png,.gif}

\makeatletter
\def\titre{\@title}
\makeatother

\title{American options in an imperfect market with default}



\author{Roxana Dumitrescu\thanks{Department of Mathematics, King's College London, Strand, London, WC2R 2LS, United Kingdom, email: \textbf{roxana.dumitrescu@kcl.ac.uk}} \and Marie-Claire Quenez \thanks{LPMA,
Université Paris 7 Denis Diderot, Boite courrier 7012, 75251 Paris cedex 05, France, email: \textbf{quenez@math.univ-paris-diderot.fr}} \and  Agnès Sulem
\thanks{ INRIA Paris,  3 rue Simone Iff, CS 42112, 75589 Paris Cedex 12, France, email: \textbf{agnes.sulem@inria.fr}}}

\begin{document}

\date{\today}

\maketitle

\begin{abstract}

We study  pricing  and  (super)hedging  for American options in an imperfect market model 
with default, where the imperfections are taken into account via the nonlinearity  of the wealth dynamics. The payoff is given by an RCLL adapted process $(\xi_t)$. 
We  define the {\em seller's superhedging price} of the American option as the minimum of the initial capitals which allow the seller to build up a superhedging portfolio.
We {prove} that this price coincides with the value function of an optimal stopping problem with nonlinear expectations   induced by BSDEs with default jump, which corresponds to the solution of a  reflected BSDE with lower barrier. Moreover, we show the
 existence of a superhedging portfolio strategy. 
 We then consider the  {\em buyer's superhedging price}, which is defined as the supremum of the initial wealths which allow the buyer to select an exercise time $\tau$ and a portfolio strategy $\varphi$  so that he/she is superhedged.   Under the additional assumption of left upper semicontinuity along stopping times of $(\xi_t)$, we show the existence of a superhedge $(\tau, \varphi)$ for the buyer, as well as a  characterization of the buyer's superhedging price via the solution of a nonlinear reflected BSDE with upper barrier.

 \end{abstract}

\textbf{Key-words:} 
American options, imperfect markets, nonlinear expectation, superhedging, default, reflected backward stochastic differential equations


\section{Introduction}


 We consider an American option associated with a terminal time $T$ and  a payoff given by an RCLL adapted  process $(\xi_t)$.
 The case of a classical perfect market has been largely studied in the literature (see e.g. \cite{KS,MR}). Recall that the {\em seller's superhedging price} (called also fair price in the literature), denoted by $u_0$, is classically defined as the minimal initial capital which enables the seller  to invest in a portfolio which  covers  his liability to pay to the buyer up to $T$ no matter 
what the exercise time chosen by the buyer.
Moreover, this price  is equal to the value function of the following optimal stopping time problem
\begin{equation}\label{Kifer}
 \sup_{\tau \in \mathcal{T}} \mathbb{E}_{Q}({\tilde \xi}_\tau),
\end{equation}
where $\mathcal{T}$ is a set of stopping times valued in  $[0,T]$. Here, $\tilde \xi_t$ denotes the discounted value of $\xi_t$,  equal to $e^{-rt} \xi_t$  in the Black and Scholes model, where $r$ is the instantaneous interest rate.
Moreover, $\mathbb{E}_{Q}$ denotes the expectation under the unique martingale probability measure $Q$ of the market model. In \cite{EPardQ}, the seller's superhedging price is characterized via a reflected BSDE with lower obstacle. 

The aim of the present paper is to study pricing and hedging issues for American options in 
 the case of  imperfections in the market model taken into account via the nonlinearity  of the wealth dynamics, modeled via a nonlinear driver $g$. We moreover include the possibility of a default. A large class of imperfect market models can fit in our framework, like different borrowing and lending interest rates, and the case when the seller of the option is a 
 "large trader" whose  hedging  strategy may affect the market prices and even the default probability. 
 
 
 We provide a  characterization of
 the seller's superhedging price $u_0$ as the value of a corresponding {\em optimal stopping problem} with a nonlinear expectation, more precisely
\begin{equation}\label{GD}
u_0= \sup_{\tau \in \mathcal{T}} \mathcal{E}^g(\xi_\tau), 
\end{equation}
where $\mathcal{E}^g$ is a nonlinear expectation/$g$-evaluation induced by a nonlinear BSDE with default jump solved under the primitive probability measure $P$ with driver $g$.
  Note that in the particular case of a perfect market, the driver $g$ is linear and \eqref{GD} reduces to $\eqref{Kifer}$. 
 We also show that the seller's superhedging price can be characterized via the solution of the reflected BSDE with driver $g$ and lower obstacle $(\xi_t)$, as well as the existence of a superhedging portfolio strategy for the seller.

We then consider the  {\em buyer's superhedging price}, denoted by $v_0$, which is defined as the supremum of the initial wealths which allow the buyer to select an exercise time $\tau$ and a portfolio strategy $\varphi$  so that he/she is superhedged.   Under the additional assumption of left upper semicontinuity along stopping times of $(\xi_t)$, we show the existence of a superhedge $(\tau, \varphi)$ for the buyer, as well as a  characterization of the buyer's superhedging price via the solution of a nonlinear reflected BSDE with upper barrier $(-\xi_t)$, that is
\begin{equation*}
v_0=
-\inf_{\nu \in {\mathcal T}}{\cal E}_{0,\nu}^g ( -\xi_{\nu}).
\end{equation*}
Note that in the classical case of a perfect market, the buyer's superhedging price is equal to the seller's superhedging price, that is $v_0=u_0$, since, in this case, the $g$-evaluation $\mathcal{E}^g$ is linear.

When $-g(t,-y,-z,-k) \leq g(t,y,z,k),$ then $v_0 \leq u_0$. The interval $[v_0,u_0]$ can then be seen as a non-arbitrage interval for the price of the American option in the sense of \cite{KaKou}. In the example of a higher interest rate for borrowing, this result corresponds to the one shown in \cite{KaKou} by a dual approach.

The paper is organized as follows:  in Section \ref{sec2}, we introduce our imperfect market model with default and nonlinear wealth dynamics. In Section \ref{SECAM}, we study  pricing and (super)hedging of American options from the seller's point of view. In Section  \ref{SECB}, we
address the pricing and (super)hedging problem from the buyer's point of view.

\section{Imperfect market model with default}\label{sec2}

\subsection{Market model with default}\label{marketmodel}
Let $(\Omega, \mathcal{G}, {P})$ be a complete probability space 
 equipped with two stochastic processes:
  a unidimensional standard Brownian motion $W$ and a jump process $N$ defined by 
  $N_t={\bf 1}_{\vartheta\leq t}$ for any $t\in[0,T]$, where $\vartheta$ is a random variable which models a default time. We assume that this default can appear at any time that is $P(\vartheta \geq t)>0$ for any $t\geq 0$. We denote by ${\mathbb G}=\{\mathcal{G}_t, t\geq 0 \}$ the {\em augmented filtration} that is generated by $W$ and $N$ (in the sense of \cite[IV-48]{DM1}). We suppose that  $W$ is a ${\mathbb G}$-Brownian motion. 
 We denote by ${\cal P}$ the ${\mathbb G}$-predictable $\sigma$-algebra.
 Let  $(\Lambda_t)$ be the  predictable compensator of the nondecreasing process $(N_t)$.
 Note that $(\Lambda_{t \wedge \vartheta})$ is then the predictable compensator of
  $(N_{t \wedge \vartheta} )= (N_t)$. By uniqueness of the predictable compensator, 
  $\Lambda_{t \wedge \vartheta} = \Lambda_t$, $t\geq0$ a.s.
  We assume that $\Lambda$ is absolutely continuous w.r.t. Lebesgue's measure, so that there exists a nonnegative process $\lambda$, 
 called the intensity process, such that $\Lambda_t=\int_0^t \lambda_s ds$, $t\geq0$.
  Since $\Lambda_{t \wedge \vartheta} = \Lambda_t$,  $\lambda$ vanishes after $\vartheta$. 
The compensated martingale   is given by
\begin{equation*}
M_t:= N_t-\int_0^t\lambda_sds\,.
\end{equation*}

Let $T >0$ be the terminal time. We define the following sets:
\begin{itemize}
\item ${S}^{2}$ 
is the set of ${\mathbb G}$-adapted RCLL processes $\varphi$ such that $\mathbb{E}[\sup_{0\leq t \leq T} |\varphi_t | ^2] < +\infty$.
\item ${\cal A}^2$  is the set of real-valued non decreasing RCLL predictable
 processes $A$ with $A_0 = 0$ and $\mathbb{E}(A^2_T) < \infty$.

\item ${\mathbb H}^2$  is the set of ${\mathbb  G}$-predictable processes $Z$ such that
 $
 \| Z\|^2:= \mathbb{E}\Big[\int_0^T|Z_t|^2dt\Big]<\infty \,.
 $
\item  ${\mathbb H}^2_{\lambda}:= L^2( \Omega \times [0,T],{\cal P}, \lambda_tdt)$, equipped with the scalar product $\langle U,V \rangle _{\lambda}:= \mathbb{E}\Big[\int_0^TU_t V_t\lambda_tdt\Big]$, for all 
$U,V$ in ${\mathbb H}^2_{\lambda}$. For each $U \in$ ${\mathbb H}^2_{\lambda}$, we set 
$\| U\|_{\lambda}^2:=\mathbb{E}\Big[\int_0^T|U_t|^2\lambda_tdt \Big]<\infty \,.$
\end{itemize}


Since  $\lambda$ vanishes after $\vartheta$, we can suppose that for each 
$U$ in ${\mathbb H}^2_{\lambda}$ $=$ $L^2( \Omega \times [0,T],{\cal P}, \lambda_tdt)$, $U$ (or its representant in ${\cal L}^2( \Omega \times [0,T],{\cal P}, \lambda_tdt)$ still denoted by 
$U$) vanishes  after $\vartheta$. 
\\
Moreover,  $\mathcal{T}$ denotes the set of
stopping times $\tau$ such that $\tau \in [0,T]$ a.s.\, and for each $S$ in $\mathcal{T}$, 
   $\mathcal{T}_{S}$ is  the set of
stopping times
$\tau$ such that $S \leq \tau \leq T$ a.s.

Recall that in this setup, we have a martingale representation theorem with respect to $W$ and $M$ (see e.g. Lemma 1 in \cite{DQS4}).


 We consider  a financial market which consists of one risk-free asset, with  price process $S^{0}$ satisfying $dS_t^{0}=S_t^{0} r_tdt$, and two risky assets with price processes $S^1,S^2$ evolving according to the following equations:
\begin{equation*}
\begin{cases}
dS_t^{1}=S_t^{1}[\mu_t^1dt +  \sigma^1_tdW_t]\\
dS_t^{2}=S_{t^-}^{2} [\mu^2_tdt+\sigma^2_tdW_t-dM_t].
\end{cases}
\end{equation*}
The process $S^0= (S_t^{0})_{0\leq t \leq T}$ corresponds to the price of a non risky asset with interest rate 
process $r= (r_t)_{0\leq t \leq T}$, 
$S^1= (S_t^{1})_{0\leq t \leq T}$ to a non defaultable risky asset, and $S^2= (S_t^{2})_{0\leq t \leq T}$ to a defaultable  asset with total default. The price process $S^2$ vanishes after $\vartheta$.
 
All the processes $\sigma^1,\sigma^2,$ $r, \mu^1,\mu^2$ are 
predictable (that is  ${\cal P}$-measurable). 
We suppose that the coefficients $\sigma^1, \sigma^2 > 0$, and  $r$, $\sigma^1,\sigma^2,$ $\mu^1,\mu^2,\lambda,$ $\lambda^{-1}$,${(\sigma^1)}^{-1}$, 
${(\sigma^2)}^{-1}$ are bounded. 

We consider an investor, endowed with an initial wealth equal to $x $, who can invest his wealth in the three assets of the market. 
At each time $t < \vartheta$, he chooses   the amount $\varphi_t^1$ (resp. $\varphi_t^2$) of wealth invested in the first 
(resp. second) risky asset. However, after time $\vartheta$, he cannot invest his wealth in the defaultable 
asset since its price is equal to $0$, and he only chooses  the amount $\varphi_t^1$ of wealth invested in the first risky asset. Note that the process $\varphi^2$ can be defined on the whole interval $[0,T]$ by setting $\varphi_t^2=0$ for each $t \geq \vartheta$.
A process $\varphi_.= (\varphi_t^1, \varphi_t^2)'_{0 \leq t \leq T}$ is called a {\em risky assets stategy} if 
it belongs to ${\mathbb H}^2 \times  {\mathbb H}^2_{\lambda}$. 
The value of the associated portfolio  (also called {\em wealth}) at 
time $t$ is denoted  by  $V^{x, \varphi}_t$ (or simply $V_t$).

\paragraph{The perfect market model.}
In the classical case of a perfect market model,  the wealth process and the strategy satisfy the self financing condition:
\begin{equation}\label{portfolio}
dV_t  = (r_t V_t+\varphi_t^1 (\mu^1_t - r_t)+\varphi_t^2(\mu^2_t - r_t) ) dt + 
(\varphi_t^1 \sigma^1_t + \varphi_t^2 \sigma^2_t) dW_t - \varphi_t^2  dM_t.
\end{equation}
Setting $K_t:=- \varphi_t^2$, and $Z_t:=\varphi_t^1 \sigma^1_t + \varphi_t^2 \sigma^2_t$, 
we get 
\begin{align*}
dV_t  = (r_t V_t+ Z_t  \theta_t^1+ K_t \theta_t ^2 \lambda_t) dt  
 + Z_t dW_t + K_t  dM_t,
\end{align*}
where $\theta_t^1:=\dfrac{\mu_t^1-r_t}{\sigma_t^1}$ and $\theta_t^2:=  \dfrac{\sigma_t^2 \theta_t^1 -\mu_t^2+r_t}{\lambda_t  }\,{\bf 1}_{\{t \leq \vartheta \} }$. \\
Consider a European contingent claim with maturity $T>0$ and payoff $\xi$ which is $\mathcal{G}_T$ measurable, belonging to  ${L}^2$. The problem is to price and hedge this claim by constructing a replicating portfolio. 
From  \cite[Proposition 2.6 ]{DQS4}, there exists an unique process $(X, Z, K) \in \mathcal{S}^2 \times {\mathbb H}^2 \times  {\mathbb H}^2_{\lambda}$ solution of the following  BSDE with default jump:
\begin{equation}\label{portfolio}
- dX_t = \displaystyle -  (r_t X_t+Z_t \theta_t^1+K_t  \theta_t^2 \lambda_t) dt -  Z_t dW_t - K_t  dM_t\,; \quad
X_T=\xi.
\end{equation}
The solution $(X, Z, K)$ provides the replicating portfolio. More precisely, 
the process $X$  corresponds to its value, and 
the hedging risky assets stategy  $\varphi \in {\mathbb H}^2_{\lambda}$ is given by $\varphi=\Phi (Z, K)$, 
where $\Phi$ is the one to one map defined on ${\mathbb H}^2 \times  {\mathbb H}^2_{\lambda}$ by:
\begin{definition}\label{stbis}
Let  
$\Phi:{\mathbb H}^2 \times  {\mathbb H}^2_{\lambda} \rightarrow {\mathbb H}^2 \times  {\mathbb H}^2_{\lambda}$ be the one-to-one map defined for each $(Z,K) \in {\mathbb H}^2 \times  {\mathbb H}^2_{\lambda} $ by $\Phi (Z, K):= \varphi,$ where $\varphi= (\varphi^1, \varphi^2)$ is given by 
\begin{equation*}
 \varphi_t^{2} = - {K_t} \;\; ; \;\; 
\varphi_t^{1} = 
 \frac{Z_t +   \sigma^2_t  K_t\, }{\sigma^1_t},
\end{equation*}
which is equivalent to 
$
K_t=- \varphi_t^2\, ;\,\,\,
Z_t= {\varphi ^1_t} \sigma^1_t + {\varphi^2_t}\, \sigma^2_t = {\varphi_t}' \sigma_t .
$ 
\end{definition}
Note that the processes $\varphi^2$ and $K$, which belong to ${\mathbb H}^2_{\lambda}$,
both vanish after time $\vartheta$.

The process $X$ coincides with $V^{X_0, \varphi}$, the value of the  portfolio 
associated with initial wealth $x=X_0$ and portfolio strategy $\varphi$. 
From the seller's point of view, this portfolio is a hedging portfolio.  Indeed, by investing the initial amount $X_0$ in the reference assets along the strategy $\varphi$, the seller  can  pay the amount $\xi$ to the buyer at time $T$ (and similarly at each initial time $t$). 
We derive that $X_t$ is the price at time $t$ of the option, called {\em hedging price}, and denoted by 
$X_t(\xi)$. 
By the representation property of  the solution  of a $\lambda$-linear BSDE with default jump (see \cite[Theorem 2.13]{DQS4}), we have that the solution $X$ of BSDE \eqref{portfolio} can be written as follows:  
\begin{equation}\label{free}
X_t(\xi)=\mathbb{E}[e^{-\int_t ^T r_s ds} \zeta_{t,T}\xi \,|\,{\cal G}_t], 
\end{equation} where 
$\zeta_{t, \cdot}$ satisfies $d\zeta_{t,s}= \zeta_{t,s^-} [-\theta^1_s dW_s - \theta^2_s  dM_s]$ with  $\zeta_{t,t}=1.$
 This defines a {\em linear} price system $X$: $\xi \mapsto X (\xi)$.
Suppose now that 
\begin{equation}\label{cth}
\theta^2_t < 1, \; 0 \leq t \leq \vartheta\, \,dt \otimes dP -a.s.\end{equation}
Then $\zeta_{t,\cdot}>0$.
Let $Q$ be the probability measure which admits  $\zeta_{0,T}$ as density on ${\cal G}_T$.
Using Girsanov's theorem, it can be shown that $Q$ is the unique martingale probability measure. 
 In this case, the price system $X$ is increasing 
and corresponds to the classical
 arbitrage free price system (see \cite{ BCJR, JYC}).

\paragraph{The imperfect market model ${\cal M}^g$. } From 
 now on,   we assume that  there are  imperfections in the market which are taken into account via 
the {\em nonlinearity} of the
dynamics of the wealth. More precisely, the
dynamics of the wealth $V$ associated with strategy $\varphi=(\varphi^1, \varphi^2)$  can be written via  a {\em nonlinear} 
driver, defined as follows: 
\begin{definition}[Driver, $\lambda$-{\em admissible} driver]\label{defd}
A  function $g$
is said to be a {\em driver} if\\
$g: [0,T]  \times \Omega \times \R^3  \rightarrow \R $; 
$(\omega, t,y, z, k) \mapsto  g(\omega, t,y,z,k) $
  is $ {\cal P} \otimes {\cal B}(\R^3) 
- $ measurable, and such that
 $g(.,0,0,0) \in {\mathbb H}^2$.
A driver $g$ is called a $\lambda$-{\em admissible driver} if moreover there exists a constant $ C \geq 0$ such that 
$dP \otimes dt$-a.s.\,,
for each $(y_1, z_1, k_1)$, $(y_2, z_2, k_2)$,
\begin{equation}\label{lip}
|g(\omega, t, y_1, z_1, k_1) - g(\omega, t, y_2, z_2, k_2)| \leq
C ( |y_1 - y_2| +|z_1 - z_2| +   \sqrt \lambda_t |k_1 - k_2 |).
\end{equation}
The positive real $C$ is called the $\lambda$-{\em constant} associated with driver $g$.
\end{definition}
Note that condition \eqref{lip} implies that  for each $\,t > \vartheta$, since $\lambda_t=0$,
$g$ does not depend on $k$. 
In other terms, for each $(y,z,k)$, we have: 
$g(t,y,z,k)= g(t,y,z,0)$, $ t > \vartheta$ $dP \otimes dt$-a.s.
%
Let $x \in {\mathbb R}$ be the initial wealth and let $\varphi=(\varphi^1, \varphi^2)$ in ${\mathbb H}^2 \times  {\mathbb H}^2_{\lambda}$ be a portfolio strategy.
We suppose that  the associated {\em wealth} process  $V^{x, \varphi}_t$ (or simply $V_t$)
satisfies  the following dynamics:
 \begin{equation}\label{weaun}
-dV_t= g(t,V_t, {\varphi_t}' \sigma_t , - \varphi_t^{2} ) dt - {\varphi_t}' \sigma_t dW_t +\varphi_t^{2} dM_t, 
 \end{equation}
 with $V_0=x$. Since $g$ is Lipschitz continuous with respect to $y$, this formulation makes sense. Indeed, setting $ f^1_t := \int_0^t {\varphi_t}' \sigma_t  dW_s +\varphi_t^{2} dM_s$, for each $\omega$, the deterministic function $
 (V_t^{Y_0, \varphi }(\omega))$ is defined as the unique solution of the following deterministic differential equation:
 \begin{equation}\label{riun}
V_t^{x, \varphi }(\omega) = x-\int_0^t 
g(\omega, s,V_s^{x, \varphi }(\omega),{\varphi_s}' \sigma_s(\omega), - \varphi_s^{2}(\omega) )ds + f^1_t(\omega)
, \,\, 
0 \leq t \leq T.
 \end{equation}
Note that, equivalently, setting $Z_t= {\varphi_t}' \sigma_t$ and
  $K_t= -  \varphi_t^2 $, the dynamics \eqref{weaun} of the wealth process $V_t$ can be written as follows:
 \begin{equation}\label{wea}
-dV_t= g(t,V_t, Z_t,K_t ) dt -  Z_t dW_t - K_t dM_t.
\end{equation}
In the following, our imperfect market model is denoted by ${\cal M}^g$.\\
Note that in the case of a perfect market (see \eqref{portfolio}), we have:
\begin{equation}\label{perfectlineaire}
g(t,y,z,k) = - r_t y -  \theta^1_t z  -    \theta^2_t  k \lambda_t
 ,
 \end{equation}
 which is a $\lambda$-admissible driver since by the assumptions on the coefficients of the model, the processes $\theta^1$ and $\theta^2$ are bounded.

 %
\subsection{Nonlinear pricing system $\mathcal{E}^g$}
 Pricing and hedging  European options in the imperfect market ${\cal M}^g$ leads to BSDEs with nonlinear driver $g$ and a default jump. By   \cite[Proposition 2.6]{DQS4}, 
we have 
\begin{proposition} \label{existence} Let  $g$ be a $\lambda$-admissible driver, let $\xi \in {L}^2({\cal G_T})$.
There exists an unique solution  $(X(T, \xi), Z(T, \xi), K(T, \xi))$ (denoted simply by
 $(X, Z, K)$)  in $ \mathcal{S}^2 \times {\mathbb H}^2 \times  {\mathbb H}^2_{\lambda}$ of the following BSDE:
\begin{equation}\label{BSDE}
-dX_t = g(t,X_t, Z_t,K_t ) dt  -  Z_t dW_t - K_t dM_t; \quad
X_T=\xi.
\end{equation}
\end{proposition}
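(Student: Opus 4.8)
The plan is to prove existence and uniqueness for the backward equation \eqref{BSDE} by the classical two-step argument for Lipschitz BSDEs, adapted to the filtration $\mathbb{G}$ generated by $W$ and $M$: first one solves the equation when the driver does not depend on the solution, and then removes this restriction by a contraction argument in a suitably weighted space.

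First I would treat the case $g(t,y,z,k)=\psi_t$ with $\psi\in{\mathbb H}^2$. Setting $X_t:=\mathbb{E}[\xi+\int_t^T\psi_sds\mid{\cal G}_t]$, the process $m_t:=X_t+\int_0^t\psi_sds=\mathbb{E}[\xi+\int_0^T\psi_sds\mid{\cal G}_t]$ is a square-integrable $\mathbb{G}$-martingale, since $\xi\in L^2$ and $\psi\in{\mathbb H}^2$. The martingale representation theorem with respect to $W$ and $M$ (Lemma 1 in \cite{DQS4}) then provides $Z\in{\mathbb H}^2$ and $K\in{\mathbb H}^2_{\lambda}$ with $m_t=m_0+\int_0^tZ_sdW_s+\int_0^tK_sdM_s$, and rearranging this identity shows that $(X,Z,K)$ solves \eqref{BSDE}. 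The process $X$ is RCLL, being an RCLL martingale plus an absolutely continuous term, and $X\in{\cal S}^2$ follows from Doob's and the Burkholder--Davis--Gundy inequalities together with $\|\psi\|<\infty$ and $\mathbb{E}[\xi^2]<\infty$; moreover $K$ may be taken to vanish after $\vartheta$ since $\lambda$ does.

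Next, for a parameter $\beta>0$ I would equip ${\mathbb H}^2\times{\mathbb H}^2\times{\mathbb H}^2_{\lambda}$ with the norm $\|(Y,Z,K)\|_\beta^2:=\mathbb{E}\big[\int_0^Te^{\beta s}(|Y_s|^2+|Z_s|^2+\lambda_s|K_s|^2)ds\big]$ and define the map $\Psi$ sending $(Y,Z,K)$ to the solution of \eqref{BSDE} whose driver is the process $g(t,Y_t,Z_t,K_t)$; the latter lies in ${\mathbb H}^2$ by \eqref{lip} and $g(\cdot,0,0,0)\in{\mathbb H}^2$, so the first step applies. Writing $(X,\bar Z,\bar K)=\Psi(Y,Z,K)$ and $(X',\bar Z',\bar K')=\Psi(Y',Z',K')$, applying Itô's formula to $e^{\beta s}|X_s-X'_s|^2$ on $[t,T]$, taking expectations (the stochastic integrals against $W$ and $M$ being true martingales after a standard localization, while the jumps contribute $\mathbb{E}\int_0^Te^{\beta s}\lambda_s|\bar K_s-\bar K'_s|^2ds$), and using the Lipschitz estimate \eqref{lip} together with Young's inequality to absorb the $y$-, $z$- and $k$-terms, I expect to arrive at
\[
\|\Psi(Y,Z,K)-\Psi(Y',Z',K')\|_\beta^2\le\frac{C'}{\beta}\,\|(Y-Y',\,Z-Z',\,K-K')\|_\beta^2
\]
with $C'$ depending only on the $\lambda$-constant $C$. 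Choosing $\beta$ large enough makes $\Psi$ a contraction, so Banach's fixed point theorem provides a unique fixed point, which is precisely the unique solution of \eqref{BSDE}; uniqueness in ${\cal S}^2\times{\mathbb H}^2\times{\mathbb H}^2_{\lambda}$ then follows from the same a priori estimate applied directly to two solutions.

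The step I expect to require the most care is this a priori estimate in the jump framework: one must separate the continuous and jump parts in Itô's formula and handle the compensator of $N$, and --- crucially --- observe that the weight $\sqrt{\lambda_t}$ appearing in the Lipschitz condition \eqref{lip} is exactly matched by the weight $\lambda_t$ in the ${\mathbb H}^2_{\lambda}$-norm, which is what allows the $k$-contributions to be controlled by the contraction. The degeneracy $\lambda_t=0$ for $t>\vartheta$ causes no difficulty, precisely because the elements of ${\mathbb H}^2_{\lambda}$ vanish there.
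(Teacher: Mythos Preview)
Your proposal is correct and follows the standard route. Note that the paper itself does not prove this proposition: it simply cites \cite[Proposition~2.6]{DQS4}. However, the paper does indicate (in the proof of Proposition~\ref{cara} for the reflected case) that the argument in \cite{DQS4} is ``based on a fixed point argument and the a~priori estimates'' --- precisely the two-step scheme you outline: first solve the driverless case via the martingale representation of Lemma~1 in \cite{DQS4}, then contract in the $\beta$-weighted norm using the $\lambda$-Lipschitz condition~\eqref{lip}. Your observation that the $\sqrt{\lambda_t}$ weight in \eqref{lip} matches the $\lambda_t$ weight in $\|\cdot\|_{\lambda}$ is exactly the point that makes the contraction go through in this default-jump setting, and your treatment of the degeneracy after $\vartheta$ is the right one.
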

Let us consider a European option with maturity $T$ and terminal payoff  $\xi \in {L}^2({\cal G_T})$ in this market model. Let $(X, Z, K)$ be 
the solution of BSDE \eqref{BSDE}.
The process $X$ is equal to  the wealth process associated with initial value $x= X_0$,
strategy $\varphi $ $= \Phi  ( Z,K)$ (where $\Phi$ is defined in Definition \ref{stbis})  that is
 $X= V^{X_0, \varphi}$.
Its initial value $X_0=X_0(T, \xi)$  is thus a sensible price 
 (at time $0$)  of the claim $\xi$ for the seller since this amount allows him/her to construct a trading 
strategy  $\varphi $ $\in {\mathbb H}^2 \times  {\mathbb H}^2_{\lambda}$, called {\em hedging strategy} (for the seller),  such that the value of the associated portfolio is equal to $\xi$ at time $T$. 
Moreover, by the uniqueness of the solution of BSDE \eqref{BSDE}, it is the unique price (at time $0$) which satisfies this hedging property. Similarly, $X_t=X_t(T, \xi)$ satisfies an analogous property at time $t$, and is called the {\em hedging price}
 at time $t$.
 %
This  leads  to a {\em nonlinear pricing} system, first introduced in \cite{EQ96} (also called 
 {\em $g$-evaluation} in 
\cite{Peng2004}) and denoted by ${\cal E}^g$.
For each $S\in [0,T]$, for each $\xi \in {L}^2({\cal G_S})$ 
 the associated 
$g$-evaluation is defined by 
${\cal E}_{t,S}^{^{g}} (\xi):= X_t(S, \xi)$ for each $t \in [0,S]$.

In order to ensure the (strict) monotonicity and the no arbitrage property of the nonlinear pricing system ${\cal E}^g$, we make the following assumption (see 
 \cite[Section 3.3]{DQS4}).
\begin{assumption}\label{Royer} 
Assume that there exists a bounded map \begin{equation*}
 {\bf \gamma}:  [0,T]  \times \Omega\times \R^4   \rightarrow  \R \,; \, (\omega, t, y,z, k_1, k_2) \mapsto 
\gamma_t^{y,z,k_1,k_2}(\omega)
\end{equation*}
 ${\cal P } \otimes {\cal B}(\R^4) $-measurable and satisfying $ dP\otimes dt $-a.s.\,, for each $(y,z, k_1, k_2)$ $\in$ $\R^4$,
\begin{equation} \label{critere}
g( t,y,z, k_1)- g(t,y,z, k_2) \geq  \gamma_t^{y,z, k_1,k_2} (k_1 - k_2 )  \lambda_t,
\end{equation} 
and $P$-a.s.\,, for each $(y,z, k_1, k_2)$ $\in$ $\R^4$,
$\gamma_{t}^{y,z, k_1, k_2} > -1$.

\end{assumption}
\noindent 
This assumption is satisfied e.g. when 
 $g$ is ${\cal C}^1$ in $k$  with $ \partial_k g(t, \cdot) > - \lambda_t$ on $\{t \leq \vartheta\}$. In the special case of a perfect
  market, $g$ is given by \eqref{perfectlineaire}, which implies  that $ \partial_k g(t, \cdot)=- \theta^2_t\lambda_t$. 
 Assumption \ref{Royer}  is then equivalent to $ \theta^2_t<1$ (which corresponds to the usual assumption \eqref{cth}).
 
\begin{remark}\label{prixnul}
Assume that $g(t,0,0,0)=0$ $dP\otimes dt $-a.s.\, 
Then for all  $S\in [0,T]$, ${\cal E}^{^{g}}_{\cdot, S} (0)= 0$ a.s.
Moreover,  by the comparison theorem for BSDEs with default jump  (see \cite[Theorem 2.17]{DQS4}),
the nonlinear pricing system  ${\cal E}^{^{g}}$ is nonnegative, that is, for all $S\in [0,T]$, 
for all $\xi \in {L}^2({\cal G_S})$, if $\xi \geq 0$ a.s., 
then ${\cal E}^{^{g}}_{\cdot, S} (\xi)\geq 0$ a.s.
\end{remark}

 \begin{definition}\label{defmart}
Let $Y \in S^2$. The process $(Y_t)$ is said to be a strong ${\cal E}$-supermartingale (resp. martingale)  if ${\cal E}_{\sigma ,\tau}(Y_{\tau}) \leq Y_{\sigma}$ (resp. $= Y_{\sigma}$) a.s. on $\sigma \leq \tau$,  for all $ \sigma, \tau \in \mathcal{T}_0$. 
\end{definition}

Note that, by the flow property of BSDEs, for each $S\in [0,T]$ and for each $\xi \in {L}^2({\cal G_S})$, the  $g$-evaluation
${\cal E}_{\cdot,S}^g (\xi)$ is an $\mathcal{E}^g$-martingale. Moreover, since $V_t^{x, \varphi}= {\cal E}_{t,T}^g (V_T^{x, \varphi})$, we have:
\begin{proposition}  \label{rima}
For each $x \in \mathbb{R}$ and each portfolio strategy $\varphi$ $\in$ ${\mathbb H}^2\times 
{\mathbb H}^2_{\lambda}$, the associated wealth process $V^{x, \varphi}$ is an $\mathcal{E}^g$-martingale.
\end{proposition}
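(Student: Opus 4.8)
The plan is to identify the wealth process $V^{x,\varphi}$ as the solution of BSDE \eqref{BSDE} associated with its own terminal value, and then to conclude by the flow (semigroup) property of the $g$-evaluation. First I would record that $V^{x,\varphi}\in\mathcal{S}^2$, so that in particular $\eta:=V_T^{x,\varphi}\in L^2(\mathcal{G}_T)$ and the statement makes sense in the framework of Definition \ref{defmart}: since $g$ is $\lambda$-admissible (hence Lipschitz with $g(\cdot,0,0,0)\in\mathbb{H}^2$) and $\varphi=(\varphi^1,\varphi^2)\in\mathbb{H}^2\times\mathbb{H}^2_\lambda$, the driving process $f^1_t=\int_0^t {\varphi_s}'\sigma_s\,dW_s+\varphi_s^2\,dM_s$ lies in $\mathcal{S}^2$, and the standard a priori estimate applied to the (pathwise, then $L^2$) solution of \eqref{riun}, equivalently \eqref{wea}, yields $\mathbb{E}[\sup_{0\le t\le T}|V_t^{x,\varphi}|^2]<\infty$.

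Next, setting $Z_t:={\varphi_t}'\sigma_t$ and $K_t:=-\varphi_t^2$ as in Definition \ref{stbis}, the dynamics \eqref{wea} say exactly that the triple $(V^{x,\varphi},Z,K)\in\mathcal{S}^2\times\mathbb{H}^2\times\mathbb{H}^2_\lambda$ solves BSDE \eqref{BSDE} with driver $g$ on $[0,T]$ and terminal condition $X_T=\eta=V_T^{x,\varphi}$. By the uniqueness part of Proposition \ref{existence}, this triple coincides with $(X(T,\eta),Z(T,\eta),K(T,\eta))$, so that $V_t^{x,\varphi}=X_t(T,\eta)=\mathcal{E}^g_{t,T}(\eta)$ for every $t\in[0,T]$. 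Applying the same reasoning on a stochastic interval $[0,\tau]$ for $\tau\in\mathcal{T}_0$, the restriction of $(V^{x,\varphi},Z,K)$ to $[0,\tau]$ is the solution of the BSDE with driver $g$ on $[0,\tau]$ and terminal value $V_\tau^{x,\varphi}$ at time $\tau$; by the definition of the $g$-evaluation along stopping times this gives $\mathcal{E}^g_{\sigma,\tau}(V_\tau^{x,\varphi})=V_\sigma^{x,\varphi}$ a.s.\ on $\{\sigma\le\tau\}$ for all $\sigma,\tau\in\mathcal{T}_0$. Equivalently, one starts from $V_t^{x,\varphi}=\mathcal{E}^g_{t,T}(\eta)$ and invokes the flow property $\mathcal{E}^g_{\sigma,\tau}\big(\mathcal{E}^g_{\tau,T}(\eta)\big)=\mathcal{E}^g_{\sigma,T}(\eta)$ to get $\mathcal{E}^g_{\sigma,\tau}(V_\tau^{x,\varphi})=\mathcal{E}^g_{\sigma,T}(\eta)=V_\sigma^{x,\varphi}$. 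This is precisely the strong $\mathcal{E}^g$-martingale property of Definition \ref{defmart}; note that we obtain an $\mathcal{E}^g$-martingale and not merely a supermartingale, because the identification rests on uniqueness of the BSDE solution, with no inequality entering.

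The only genuinely delicate point is the passage from a deterministic horizon $T$ to a stopping-time horizon $\tau$: one needs that BSDE \eqref{BSDE} is well posed on the random interval $[0,\tau]$ and that its solution is the restriction of the solution on $[0,T]$ — equivalently, that $\mathcal{E}^g$ satisfies the flow property along stopping times. This is the $\lambda$-admissible analogue of the classical flow property of BSDEs, available from the references cited in the excerpt; once it is granted, the rest of the argument is the bookkeeping above and requires no further estimates.
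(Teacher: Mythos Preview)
Your argument is correct and is essentially the same as the paper's: the paper simply remarks that, by the flow property of BSDEs, ${\cal E}^g_{\cdot,S}(\xi)$ is an $\mathcal{E}^g$-martingale, and that $V_t^{x,\varphi}={\cal E}^g_{t,T}(V_T^{x,\varphi})$, which is exactly your identification of $(V^{x,\varphi},Z,K)$ as the unique solution of BSDE \eqref{BSDE} with terminal datum $V_T^{x,\varphi}$ followed by the flow property. Your additional care in checking $V^{x,\varphi}\in\mathcal{S}^2$ and in flagging the stopping-time flow property as the only nontrivial ingredient goes beyond what the paper spells out but does not change the approach.
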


\begin{example}[Large investor seller] \label{eximp}



Suppose  that the seller of the option  is a  large trader whose hedging strategy $\varphi$ and its associated cost $V$ may influence the market prices
  (see e.g. \cite{CM, BK2}). 
 Taking into account  the possible feedback effects in the market model, the large trader-seller  may suppose that the coefficients 
are of the form $\sigma_t (\omega)= \bar \sigma (\omega, t, V_t, \varphi_t)$ where $
\bar \sigma:   \Omega \times [0,T]  \times \mathbb{R}^3 \mapsto  \mathbb{R}^2  \,; \, ( \omega,t, x, z,k) \mapsto 
 \bar \sigma ( \omega, t, x, z, k)
$ is a ${\cal P } \otimes {\cal B}({\mathbb R}^3)/{\cal B}({\mathbb R}^2)$-measurable map,
 and similarly for the other coefficients $ r$, $ \mu^1$, $ \mu^2$. The driver is thus of the form:
 $$g(t, V_t, \varphi '_t \bar \sigma_t (t,  V_t, \varphi_t), -\varphi_t^{2}  )=- \bar r (t,  V_t, \varphi_t) V_t-\varphi_t^1 \,
 (\bar \mu^1_t - \bar r_t) (t,  V_t, \varphi_t)-\varphi_t^2(\bar \mu^2_t - \bar r_t)   (t,  V_t, \varphi_t). $$ 
%
  Here, the map  $\Psi:$ $(\omega, t,y,\varphi) \mapsto (z,k)$ with $z={\varphi}' \bar \sigma_t(\omega,t,y,\varphi)$ and
 $k=- \varphi^2$ is assumed to be one to one with respect to $\varphi$, and such that its inverse $\Psi^{-1}_{\varphi} $ is ${\cal P}\otimes {\cal B} ({\mathbb R}^3)/{\cal B} ({\mathbb R}^2)$-measurable. 

 We can also include the case when the seller strategy influences the default probability via the default intensity by considering the driver  of the following form 
\footnote{ For details, see in \cite{DQS4} the example of the large investor seller, in particular equation  (3.12).}
 $$g(t, V_t, \varphi '_t \bar \sigma_t (t,  V_t, \varphi_t), -\varphi_t^{2}  )-\gamma(t,V_t,\varphi_t)\lambda_t \varphi_t^2,$$ 
 where $\gamma:\Omega \times [0,T] \times \mathbb{R}^3 \mapsto \mathbb{R}^2$ is ${\cal P}\otimes {\cal B} ({\mathbb R}^3)/{\cal B} ({\mathbb R}^2)$-measurable.


\end{example}
For other examples (the case of taxes and the case of different borrowing and lending interest rates), the reader is referred to \cite{DQS5} or \cite{EPQ01}.

 Note that when the market is perfect,  the prices 
 $S^0, S^1$ and $S^2$ are ${\cal E}^g$-martingales.\footnote{This  corresponds to the well-kown property
  that the discounted prices of the reference assets are martingales under the martingale probability measure $Q$.}
 This is also true in the examples of the large investor and  different borrowing and lending interest rates.
 In the case of taxes, this property is not necessarily satisfied. 

\section{American option pricing from the seller's point of view}\label{SECAM}

Let us consider an American option associated with horizon $T>0$ and a payoff given by an RCLL adapted process $(\xi_t, 0 \leq t \leq T)$. At time $0$, it consists  in the selection 
of a stopping time $\nu \in {\cal T}$ and   the payment of the payoff 
$\xi_{\nu}$ from the seller to the buyer.

The {\em seller's superhedging price} of the American option at time $0$, denoted by $u_0$, is classically defined as the minimal initial capital which enables the seller  to invest in a portfolio which  covers  his liability to pay to the buyer up to $T$ no matter 
what the exercise time chosen by the buyer.
More precisely, for each initial wealth $x$, we denote by ${\cal A} (x)$ the set of all portfolio strategies $ \varphi$ $\in$  ${\mathbb H}^2$ such that 
$V^{x, \varphi}_{ t} \geq \xi_t$, $0 \leq t \leq T$ a.s. 
The {\em seller's superhedging price}
of the American option is thus defined by
$$ u_0:= \inf \{x \in \R,\,\, \exists  \varphi \in {\cal A} (x) \}.$$
  \begin{remark}
Suppose that $g(t,0,0,0) =0$. From Remark \ref{prixnul}, we derive that 
 if $\xi_\cdot \geq 0$, the infimum in the definition of $u_0$ can be taken only over nonnegative initial wealths, that is, $ u_0:= \inf \{x \geq 0,\,\, \exists  \varphi \in {\cal A} (x) \}.$
\end{remark}
 We define the $g$-{\em value} of the American option as
\begin{equation} \label{prixS}
\sup_{\nu \in {\mathcal T}}{\cal E}_{0,\nu}^g ( \xi_{\nu}).
\end{equation}
%

\begin{proposition}[Characterization of the {\em $g$-value}] \label{cara}
 There exists a unique process $(Y,Z,K,A)$ solution of the reflected BSDE 
  associated with driver $g$ and obstacle $\xi$ in 
 ${\cal S}^2 \times {\mathbb H}^2 \times {\mathbb H}^2_{\lambda}\times {\cal A}^2 $,
that is
 \begin{align}
   &-dY_t = g(t,Y_t,  Z_t, K_t )dt +dA_t - Z_t  dW_t -  K_t dM_t; \; \; Y_T = \xi_T, \;    \label{RBSDE} 
   \text{with} &  \\
 &  \,\,\,\,Y \geq \xi \,,\,\,  \displaystyle   \int_0^T (Y_t - \xi_t) dA^c_t = 0 \text{ a.s. and}\,\,
  \Delta A_{t}^{d}= \Delta A_{t}^{d} 
 {\bf 1}_{\{Y_{t^-} = \xi_{t^-}\}},
 \label{sko}
\end{align}
where  $A^c$ denotes the continuous part of $A$ and $A^d$ its discontinuous part.

Moreover, we have 
\begin{equation}\label{dualrepresentation}
 Y_0=\sup_{\nu \in {\cal T} }{\cal E}_{0,\nu} ( \xi_{\nu}).
\end{equation}
\end{proposition}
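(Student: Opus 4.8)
\textbf{Proof plan for Proposition \ref{cara}.}

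The plan is to invoke the existence and uniqueness theory for reflected BSDEs with a default jump, and then to establish the dual representation \eqref{dualrepresentation} by the classical "penalization-free" optimal stopping argument adapted to the nonlinear $g$-evaluation framework. First I would recall that, since $g$ is a $\lambda$-admissible driver and $\xi \in {\cal S}^2$ (the obstacle is an RCLL adapted process of class ${\cal S}^2$, with $\xi_T \in L^2({\cal G}_T)$), the existence and uniqueness of a solution $(Y,Z,K,A) \in {\cal S}^2 \times {\mathbb H}^2 \times {\mathbb H}^2_\lambda \times {\cal A}^2$ to the reflected BSDE \eqref{RBSDE}--\eqref{sko} follows from the corresponding result in \cite{DQS4} (reflected BSDEs with default jump in the general, not necessarily regular, obstacle case), or alternatively from a Snell-envelope construction. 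The Skorokhod-type conditions in \eqref{sko} — the Stieltjes condition on the continuous part $A^c$ and the support condition on the jumps $\Delta A^d$ forcing a downward jump of $Y$ only where $Y_{t^-}=\xi_{t^-}$ — are exactly the ones ensuring uniqueness; I would simply cite this.

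For the representation \eqref{dualrepresentation}, the key is to show that $Y$ is the $\mathcal{E}^g$-Snell envelope of $\xi$, i.e. the smallest strong $\mathcal{E}^g$-supermartingale (in the sense of Definition \ref{defmart}) dominating $\xi$. The two inclusions go as follows. For the inequality $Y_0 \geq \sup_{\nu\in{\cal T}} {\cal E}^g_{0,\nu}(\xi_\nu)$: fix $\nu \in {\cal T}$; since $A$ is nondecreasing, on $[0,\nu]$ the process $Y$ is a strong $\mathcal{E}^g$-supermartingale, so by the comparison theorem for BSDEs with default jump (\cite[Theorem 2.17]{DQS4}) applied between $0$ and $\nu$, together with $Y_\nu \geq \xi_\nu$ a.s. and again comparison, we get $Y_0 = {\cal E}^g_{0,\nu}(Y_\nu) \geq {\cal E}^g_{0,\nu}(\xi_\nu)$ — more precisely one first shows ${\cal E}^g_{0,\nu}(Y_\nu)\le Y_0$ using that $dA\ge 0$ makes the RBSDE driver dominate the BSDE driver, then uses monotonicity of ${\cal E}^g$ in the terminal condition. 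Taking the supremum over $\nu$ gives the first inequality. For the reverse inequality $Y_0 \leq \sup_{\nu\in{\cal T}} {\cal E}^g_{0,\nu}(\xi_\nu)$: introduce the stopping time $\tau_0 := \inf\{t \geq 0 : Y_t = \xi_t\}$ (or an $\varepsilon$-optimal variant, e.g. $\tau_\varepsilon:=\inf\{t\ge 0: Y_t\le \xi_t+\varepsilon\}$, to handle the general obstacle case cleanly), and show that on $[0,\tau_\varepsilon]$ the increasing process $A$ is (essentially) constant — this is where the Skorokhod conditions \eqref{sko} enter decisively: $A^c$ does not increase off $\{Y=\xi\}$, and $A^d$ has no jump off $\{Y_{t^-}=\xi_{t^-}\}$. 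Hence on $[0,\tau_\varepsilon]$, $(Y,Z,K)$ solves the (non-reflected) BSDE with driver $g$ and terminal value $Y_{\tau_\varepsilon}$ at time $\tau_\varepsilon$, so $Y_0 = {\cal E}^g_{0,\tau_\varepsilon}(Y_{\tau_\varepsilon})$. Since $Y_{\tau_\varepsilon} \leq \xi_{\tau_\varepsilon} + \varepsilon$ (using RCLL regularity of $Y$ and $\xi$ to pass the limit in the definition of $\tau_\varepsilon$), monotonicity of ${\cal E}^g$ and a standard estimate on ${\cal E}^g$ under an $L^2$ perturbation of the terminal condition give $Y_0 \leq {\cal E}^g_{0,\tau_\varepsilon}(\xi_{\tau_\varepsilon}) + C\varepsilon \leq \sup_{\nu\in{\cal T}} {\cal E}^g_{0,\nu}(\xi_\nu) + C\varepsilon$. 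Letting $\varepsilon \downarrow 0$ concludes.

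The main obstacle, and the step requiring the most care, is the argument that $A$ does not increase on $[0,\tau_\varepsilon]$ in the general (irregular) obstacle setting: one must rule out both a contribution from $A^c$ (straightforward from $\int_0^T (Y_t-\xi_t)\,dA^c_t = 0$ together with $Y_t>\xi_t$ for $t<\tau_0$, but needs the $\varepsilon$-relaxation since strictly before $\tau_\varepsilon$ we only know $Y_t > \xi_t + \varepsilon$ — or rather $Y_t>\xi_t$, and one works with $\tau_0$ directly if $\xi$ has enough regularity, otherwise with $\tau_\varepsilon$ and a slightly more delicate estimate) and a contribution from the predictable jumps $\Delta A^d$, which is handled by the support condition $\Delta A^d_t = \Delta A^d_t \mathbf{1}_{\{Y_{t^-}=\xi_{t^-}\}}$. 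A clean way to organize this is to prove directly that $Y$ coincides with the value function $\bar Y_0 := \operatorname*{ess\,sup}_{\nu\in{\cal T}}{\cal E}^g_{0,\nu}(\xi_\nu)$ by verifying that the latter is a strong $\mathcal{E}^g$-supermartingale dominating $\xi$ whose associated Doob–Meyer–Mertens-type decomposition in the ${\cal E}^g$ sense yields a quadruple solving \eqref{RBSDE}--\eqref{sko}, and then to conclude by the uniqueness part already established; this route trades the delicate stopping-time analysis for an appeal to the nonlinear optimal stopping theory (nonlinear Snell envelope) developed for $g$-evaluations with jumps, which I would cite rather than redevelop. Either way, the continuous-time a.s. comparison theorem for BSDEs with default jump under Assumption \ref{Royer} is the workhorse throughout.
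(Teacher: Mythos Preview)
Your proposal is correct and, for the dual representation \eqref{dualrepresentation}, it spells out precisely the argument that the paper merely cites: the paper just refers to \cite[Theorem 3.3]{QS2} for the identity $Y_S=\operatorname*{ess\,sup}_{\tau\in\mathcal T_S}\mathcal E^g_{S,\tau}(\xi_\tau)$ and takes $S=0$, whereas you reproduce the standard two-inequality proof (supermartingale inequality via comparison, plus the $\varepsilon$-optimal stopping time $\tau_\varepsilon$ on which $A$ is constant thanks to the Skorokhod conditions). So the route is the same.

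One small correction on the existence/uniqueness part: you cite \cite{DQS4} for reflected BSDEs with default jump, but that reference treats only non-reflected BSDEs. The paper actually establishes the RBSDE well-posedness here, in two steps: first the driver-independent case via the martingale representation for $\mathbb G$-martingales together with optimal stopping theory (as in \cite{EPardQ,Ham,QS2}), and then the general $\lambda$-admissible driver by a fixed-point argument relying on the a priori estimates for RBSDEs with default given in the Appendix (Lemma \ref{est}). Your ``Snell-envelope construction'' alternative is in the right spirit for step one, but the contraction argument is what closes the loop for a general $g$, and it is not something you can simply cite from \cite{DQS4}.
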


Note that the equality \eqref{dualrepresentation} was shown in \cite{EQ96} in a Brownian framework with a continuous obstacle, and generalized in \cite{QS2} to the RCLL case with jumps. 

\begin{proof}
Let us first show that there exists a unique solution of RBSDE \eqref{RBSDE}. 
As usual, we first consider the case when the driver $g(t)$ does not depend on the solution. 
By using the representation property of ${\mathbb G}$-martingales 

(see e.g. Lemma 1 in \cite{DQS4}) 

and some results of optimal stopping  theory, one can show, proceeding as in \cite{EPardQ} (see also \cite{Ham} and \cite{QS2}), that there exists a unique solution of the associated 
RBSDE  \eqref{sko}. 
The proof in the general case is the same as for non reflected BSDEs with default jump (see the proof of Proposition 2.6 in \cite{DQS4}). It is based on a fixed point argument and the a priori estimates for RBSDEs with default given in the Appendix (see Lemma \ref{est}).

Let us now show the equality \eqref{dualrepresentation}.
Proceeding as in the proof of  \cite[Theorem 3.3]{QS2} which was given in the framework of a random Poisson measure, we can prove that for each $S \in {\cal T}$,
\begin{equation}\label{dualS}
Y_S= {\rm ess} \sup_{\tau \in {\cal T}_S} \, \cal{E}^g_{S,\tau }[\xi_\tau] \quad {\rm a.s.}
\end{equation}
Equality \eqref{dualrepresentation} then follows by taking $S=0$. 
\end{proof}

\begin{lemma}\label{Ac}
%
If $\xi$ is left-u.s.c. along stopping time, then
$A$ is continuous. 
\end{lemma}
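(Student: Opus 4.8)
The plan is to exploit the Skorokhod-type condition in \eqref{sko} together with the standard characterization of $Y$ as the value function of an optimal stopping problem with nonlinear expectations, \eqref{dualS}. First I would recall that, by \eqref{sko}, the discontinuous part $A^d$ jumps only at times $t$ where $Y_{t^-}=\xi_{t^-}$; so it suffices to show that, under left upper semicontinuity of $\xi$ along stopping times, the process $Y$ has no negative jump that coincides with such a contact time, equivalently that $\Delta A^d \equiv 0$. The natural route is to show that $Y_{t^-} = \xi_{t^-}$ forces $Y_{t^-} \leq Y_t$ (no downward jump at $t$), which together with the general fact $\Delta A_t = -(Y_t - Y_{t^-}) \geq 0$ (read off from \eqref{RBSDE}, since $W$ is continuous and the jump of $\int K\,dM$ at a predictable time is controlled) pins down $\Delta A_t^d = 0$.

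The key step is therefore a left-limit comparison. Fix a predictable stopping time $\tau$ and approximate it from the left by a sequence $\tau_n \uparrow \tau$, $\tau_n < \tau$. Using \eqref{dualS} at $S = \tau_n$, and a martingale-type / continuity argument for the nonlinear evaluation ${\cal E}^g$ (the $g$-evaluation is continuous with respect to monotone limits of stopping times, by the a priori estimates for BSDEs with default jump and the strong ${\cal E}^g$-supermartingale property of $Y$ from Definition \ref{defmart}), one passes to the limit to get $Y_{\tau^-} = \lim_n Y_{\tau_n} \geq \limsup_n {\cal E}^g_{\tau_n,\tau}(\xi_\tau)$. Combining this with the left-u.s.c. assumption, which gives $\limsup_n \xi_{\tau_n} \leq \xi_\tau$ — more precisely, one wants $Y_{\tau^-} \geq \xi_{\tau^-}$ is automatic, so the real content is the reverse direction: on $\{Y_{\tau^-} = \xi_{\tau^-}\}$ one shows $Y_{\tau^-} \leq Y_\tau$. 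Indeed on this event, using $Y_\tau \geq \xi_\tau$ and left upper semicontinuity $\xi_{\tau^-} \leq \liminf \text{(or the relevant bound)}$ one concludes $Y_{\tau^-} = \xi_{\tau^-} \leq \xi_\tau \leq Y_\tau$, hence $\Delta Y_\tau \geq 0$, hence $\Delta A_\tau = 0$. Since every predictable jump time of $A^d$ is of this form (by the second condition in \eqref{sko}) and $A^d$ is predictable, this yields $A^d \equiv 0$, so $A = A^c$ is continuous.

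The main obstacle I anticipate is the rigorous passage to the left limit for the nonlinear evaluation ${\cal E}^g$: one must justify that $Y_{\tau_n} \to Y_{\tau^-}$ and control ${\cal E}^g_{\tau_n,\tau}(\xi_\tau)$ versus $\xi_\tau$ uniformly, which relies on the $L^2$-estimates and stability of BSDEs with default jump established in the cited work \cite{DQS4} and on the fact that $Y \in {\cal S}^2$ has regulated paths with well-defined left limits. The analogous statement and argument in the purely Brownian / Poisson-random-measure setting appear in \cite{EPardQ} and \cite{QS2}, and the proof here is a straightforward transcription to the present filtration generated by $W$ and the single default jump $M$; one only needs to check that the contact-time decomposition \eqref{sko} plus left-u.s.c. kills exactly the predictable jumps of $A$, which is the substance of the argument.
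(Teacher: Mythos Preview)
Your core argument is correct and is exactly the paper's proof: at a predictable stopping time $\tau$ the martingale part has no jump, so $\Delta A_\tau = (\Delta Y_\tau)^-$; by the Skorokhod condition \eqref{sko} this equals ${\bf 1}_{\{Y_{\tau^-}=\xi_{\tau^-}\}}(Y_\tau-\xi_{\tau^-})^-$; and then the chain $Y_{\tau^-}=\xi_{\tau^-}\le \xi_\tau \le Y_\tau$ (left-u.s.c.\ plus $Y\ge\xi$) kills it. Since $A$ is predictable, this exhausts all jumps of $A$.

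However, everything you write about approximating $\tau$ by $\tau_n\uparrow\tau$, invoking the dual representation \eqref{dualS}, and passing to limits in the nonlinear evaluation ${\cal E}^g$ is unnecessary and should be deleted. The paper's proof is five lines and uses none of that machinery: once you observe $\Delta A_\tau = {\bf 1}_{\{Y_{\tau^-}=\xi_{\tau^-}\}}(Y_\tau-\xi_{\tau^-})^-$, the inequality $\xi_{\tau^-}\le\xi_\tau$ (which \emph{is} what left-u.s.c.\ along stopping times gives, not the vague ``$\xi_{\tau^-}\le\liminf$'' you wrote) finishes immediately. Your ``main obstacle'' paragraph is therefore a red herring --- there is no limit of ${\cal E}^g$-evaluations to control, and no stability estimates for BSDEs are needed here.
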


\begin{proof}
Let $\tau$ be a 
predictable stopping time. By \eqref{RBSDE}, we have
$\Delta A_{\tau}=(\Delta Y_{\tau})^-$. Using the Skorokhod conditions \eqref{sko}, we get 
$
\Delta A_{\tau}= {\bf 1}_{ \{Y_{\tau^-} = \xi_{\tau^-}\} }( Y_{\tau}- Y_{\tau^-})^-= 
{\bf 1}_{ \{Y_{\tau^-} = \xi_{\tau^-}\} }( Y_{\tau}- \xi_{\tau^-})^-
$ a.s.
Now, since by assumption, $\xi_{\tau^-} \leq \xi_{\tau}$ a.s.\,, we have 
$Y_{\tau}- \xi_{\tau^-} \geq Y_{\tau}- \xi_{\tau} \geq 0$ a.s. We derive that $\Delta A_{\tau}= 0$ a.s.
It follows that 
$A$ is continuous. 
\end{proof}

We now provide two characterizations of the seller's superhedging price, which generalize those provided in the literature in the  case of a perfect market (see \cite{EPardQ}) to the case of an imperfect market.



\begin{proposition}[{\em Seller's superhedging price} of the American option]\label{americano}
The {\em seller's superhedging price} $u_0$ of the American option is equal to its {\em $g$-value}, that is 
\begin{equation}\label{optimalstopping}
u_0=
\sup_{\nu \in {\mathcal T}}{\cal E}_{0,\nu}^g ( \xi_{\nu}).
\end{equation}
 Moreover, $u_0=Y_0$, where $(Y,Z,K,A)$
 is the solution of the {\em nonlinear} reflected BSDE \eqref{RBSDE} and
the portfolio strategy $\varphi^*:= \Phi(Z,K)$ (where $\Phi$ is defined in Definition \ref{stbis}) is a superhedging strategy for the seller.
\end{proposition}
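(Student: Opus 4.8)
The plan is to prove the two inequalities $u_0 \le \sup_{\nu}\mathcal{E}^g_{0,\nu}(\xi_\nu)$ and $u_0 \ge \sup_{\nu}\mathcal{E}^g_{0,\nu}(\xi_\nu)$, using the solution $(Y,Z,K,A)$ of the reflected BSDE \eqref{RBSDE} as the bridge, since by Proposition \ref{cara} we already know $Y_0 = \sup_{\nu}\mathcal{E}^g_{0,\nu}(\xi_\nu)$ (and more generally \eqref{dualS}). First I would establish that $Y_0 \ge u_0$ by exhibiting an explicit superhedging strategy starting from initial wealth $Y_0$. Set $\varphi^* := \Phi(Z,K)$, where $\Phi$ is the map of Definition \ref{stbis}. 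Then the wealth process $V^{Y_0,\varphi^*}$ satisfies, by \eqref{wea}, the dynamics $-dV_t = g(t,V_t,Z_t,K_t)\,dt - Z_t\,dW_t - K_t\,dM_t$ with $V_0 = Y_0$. Comparing this with \eqref{RBSDE}, the process $Y$ satisfies the same dynamics but with an extra nondecreasing term $dA_t$ on the right side, i.e. $Y$ is a "$g$-supersolution." A comparison-type argument (or a direct application of the comparison theorem for BSDEs with default jump, \cite[Theorem 2.17]{DQS4}, after rewriting $Y$ as the solution of a BSDE with driver $g(t,y,z,k)$ and an added finite-variation perturbation) gives $V^{Y_0,\varphi^*}_t \ge Y_t$ for all $t$, hence $V^{Y_0,\varphi^*}_t \ge Y_t \ge \xi_t$ a.s. So $\varphi^* \in \mathcal{A}(Y_0)$, which shows $u_0 \le Y_0$ and simultaneously identifies $\varphi^*$ as a superhedging strategy.

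Next I would prove the reverse inequality $u_0 \ge Y_0$. Take any $x$ and any $\varphi \in \mathcal{A}(x)$, so $V^{x,\varphi}_t \ge \xi_t$ for all $t$ a.s. By Proposition \ref{rima}, $V^{x,\varphi}$ is an $\mathcal{E}^g$-martingale; in particular, for every $\nu \in \mathcal{T}$, $x = V^{x,\varphi}_0 = \mathcal{E}^g_{0,\nu}(V^{x,\varphi}_\nu)$. Using the monotonicity (nonnegativity / comparison) of the nonlinear evaluation $\mathcal{E}^g$ — available from \cite[Theorem 2.17]{DQS4} under Assumption \ref{Royer} — and $V^{x,\varphi}_\nu \ge \xi_\nu$, we get $x = \mathcal{E}^g_{0,\nu}(V^{x,\varphi}_\nu) \ge \mathcal{E}^g_{0,\nu}(\xi_\nu)$. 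Taking the supremum over $\nu \in \mathcal{T}$ yields $x \ge \sup_{\nu}\mathcal{E}^g_{0,\nu}(\xi_\nu) = Y_0$, and then taking the infimum over admissible $x$ gives $u_0 \ge Y_0$. Combining the two inequalities gives $u_0 = Y_0 = \sup_{\nu}\mathcal{E}^g_{0,\nu}(\xi_\nu)$, and the superhedging property of $\varphi^*$ was already obtained above.

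The main obstacle I anticipate is the rigorous comparison step in the first half: one must justify that $V^{Y_0,\varphi^*} \ge Y$ given that $Y$ solves a reflected BSDE rather than a plain BSDE. The cleanest route is to view $Y$ as solving $-dY_t = \big(g(t,Y_t,Z_t,K_t) + dA_t/dt\big)dt - \ldots$ heuristically, or more carefully to apply the comparison theorem to the pair after writing $\tilde Y_t := Y_t$ with "driver" $g$ and added nondecreasing process, appealing to the version of the comparison theorem for BSDEs with default that allows a monotone perturbation of the driver; Assumption \ref{Royer} (in particular $\gamma > -1$) is exactly what is needed for this comparison to hold in the presence of the jump term $K\,dM$. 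A subtlety worth a line is checking $\varphi^* \in \mathbb{H}^2$ (indeed $\mathbb{H}^2 \times \mathbb{H}^2_\lambda$), which follows since $(Z,K) \in \mathbb{H}^2 \times \mathbb{H}^2_\lambda$ and $\Phi$ maps into $\mathbb{H}^2 \times \mathbb{H}^2_\lambda$, together with the boundedness of $\sigma^1, \sigma^2, (\sigma^1)^{-1}$. The rest is routine.
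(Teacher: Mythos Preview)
Your overall strategy is exactly the paper's: use Proposition~\ref{cara} to identify $Y_0=\sup_{\nu}\mathcal{E}^g_{0,\nu}(\xi_\nu)$, show $\varphi^*:=\Phi(Z,K)\in\mathcal{A}(Y_0)$ to get $u_0\le Y_0$, and then use the $\mathcal{E}^g$-martingale property of wealth processes together with monotonicity of $\mathcal{E}^g$ to get $u_0\ge Y_0$. The second half of your argument matches the paper verbatim.

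The one point where you diverge from the paper is the tool for the comparison $V^{Y_0,\varphi^*}_t\ge Y_t$. You propose the BSDE comparison theorem \cite[Theorem~2.17]{DQS4}, but that theorem compares two BSDEs via their terminal data and drivers, with the $(Z,K)$ components being \emph{part of the unknown}. Here the situation is different: $V^{Y_0,\varphi^*}$ and $Y$ share the \emph{same} $(Z,K)$ and the \emph{same} initial value $Y_0$; the only difference is the nondecreasing term $A$. The paper therefore argues pathwise: for fixed $\omega$, set $f^1_t:=\int_0^t Z_s\,dW_s+\int_0^t K_s\,dM_s$ and $f^2_t:=f^1_t-A_t$, so that $t\mapsto V^{Y_0,\varphi^*}_t(\omega)$ and $t\mapsto Y_t(\omega)$ solve the forward ODEs
\[
x_t=Y_0-\int_0^t g(s,x_s,Z_s(\omega),K_s(\omega))\,ds+f^i_t(\omega),\qquad i=1,2,
\]
and since $f^1\ge f^2$ and $g$ is Lipschitz in $y$, a deterministic forward comparison (Gronwall-type, cf.\ the Appendix of \cite{DQS5}) yields $V^{Y_0,\varphi^*}\ge Y$. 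In particular, Assumption~\ref{Royer} (the condition $\gamma>-1$) is \emph{not} needed for this step, contrary to what you suggest: because the $K$-components coincide, only the Lipschitz property of $g$ in $y$ enters. Assumption~\ref{Royer} is used only in the second half, to ensure monotonicity of $\mathcal{E}^g$.
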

Note that in the case of a perfect market, 
equality \eqref{optimalstopping} reduces to the well-known characterization of the price of the American option as the value function of a classical optimal stopping problem, and the equality $u_0=Y_0$ corresponds to the well-known
characterization of this price as the solution of the {\em linear} reflected BSDE associated with the {\em linear} driver \eqref{perfectlineaire} (see \cite{EPardQ}).

\begin{proof}
The proof is based on the characterization of the {\em $g$-value}  as  the solution  of the reflected  
BSDE \eqref{RBSDE}
(see Proposition \ref{cara}).
It is sufficient to show that $u_0 =Y_0$ and $\varphi^* \in \mathcal{A}(u_0)$.\\
Let $\mathcal{H}$  be the set of initial capitals which allow the seller to be ``super-hedged", that is
$\mathcal{H}= \{ x \in \mathbb{R}: \exists \varphi \in \mathcal{A}(x) \}. $
Note that $u_0= \inf \mathcal{H}$.

Let us first show that 
\begin{equation} \label{phiA}
\varphi ^*\in \mathcal{A}
(Y_0).
\end{equation}  
By \eqref{weaun}-\eqref{wea}, for each $\omega$, the trajectory of the value of this portfolio $t \mapsto V_t^{Y_0, \varphi ^*}(\omega)$ satisfies the following forward differential equation:
\begin{align}\label{ri}
V_t^{Y_0, \varphi ^*}(\omega) = Y_0-\int_0^t 
g(s,\omega, V_s^{Y_0, \varphi ^{*}}(\omega),Z_s(\omega),K_s(\omega))ds + f^1_t(\omega)
, \,\, 
0 \leq t \leq T,
 \end{align}
 where $ f^1_t := \int_0^t Z_s dW_s +\int_0^t K_s dM_s$. 
Moreover, since $Y$ is the solution of the reflected BSDE \eqref{RBSDE}, for almost every $\omega$, the function  $t \mapsto Y_t(\omega)$ satisfies:
\begin{equation}\label{forww}
Y_t(\omega)=Y_0-\int_0^t g(s,\omega,Y_s(\omega),Z_s(\omega),K_s(\omega))ds+f^2_t(\omega), \,\, 0 \leq t \leq 
T,
\end{equation}
where $ f^2_t := f^1_t -A_t$. 
Let us apply for fixed $\omega$ a comparison result for forward differential equations 

(see \cite{DQS5} in the Appendix).

We thus get  $V_t^{Y_0, \varphi^*} \geq Y_t $, $0 \leq t \leq T $ a.s. Since 
$Y_. \geq \xi_.$, we have  $V_t^{Y_0, \varphi^*} \geq 
\xi_t $, $0 \leq t \leq T $ a.s.\,, which implies the desired property \eqref{phiA}. It follows that $Y_0 \geq u_0$.

Let us show the converse inequality.
%
%
%
Let $x \in \mathcal{H}$. There exists $\varphi \in \mathcal{A}(x)$ such that 
$V^{x, \varphi}_{t} \geq \xi_t$, $0 \leq t \leq T$ a.s. 
For each $\nu \in \mathcal{T}$ we thus have
$V^{x, \varphi}_{\nu } \geq \xi_\nu $ a.s.\,
By taking the $\mathcal{E}^g$-evaluation in this inequality, using the monotonicity 
of $\mathcal{E}^g$ and the $\mathcal{E}^g$-martingale property of the wealth process $V^{x, \varphi}$,
we obtain
$
x =\mathcal{E}^g _{0,\nu }[V^{x, \varphi}_{\nu}] \geq 
\mathcal{E}_{0, \nu }^g [\xi_\nu].
$
By arbitrariness of $\nu \in \mathcal{T}$, we get
$
x \geq 
\sup_{\nu \in \mathcal{T}} \mathcal{E}_{0, \nu }^g [\xi_\nu],
$
which holds for any $x \in \mathcal{H}$. By taking the infimum over $x \in \mathcal{H}$, 
we obtain $u_0 \geq Y_0$. We derive that $u_0 = Y_0$. By \eqref{phiA}, we thus have $\varphi^* \in \mathcal{A}(u_0)$, which ends the proof. 
\end{proof}

\begin{remark}
In general, except when $g$ does not depend on $y$, by \eqref{forww}, we have
$$ Y_\cdot= Y_0-\int_0^\cdot g(s,Y_s,Z_s,K_s)ds+\int_0^\cdot  Z_sdW_s+\int_0^.K_sdM_s -A_\cdot \not \equiv V_\cdot^{Y_0, \varphi^*} -A_\cdot.$$
\end{remark}

 \begin{remark}\label{Important} We can define the seller's superhedging price of the American option at
  each time/stopping time $S \in {\cal T}$.
   More precisely, for each initial wealth $X\in L^2( {\cal F}_S) $, a {\em super-hedge} against the American option
    is a  portfolio strategy $ \varphi$ $\in$  ${\mathbb H}^2 \times  {\mathbb H}^2_{\lambda}$ such that
$
V^{S,X, \varphi}_{t } \geq \xi_t,$  $S\leq t \leq T$ a.s.\,,
where $V^{S,X, \varphi}$ denotes the wealth process associated with initial time $S$ and initial condition $X$.
The {\em seller's superhedging price} at time $S$ is defined by 
$
 u(S):= {\rm ess} \inf \{X\in L^2( {\cal F}_S),\,\, \exists   \varphi \in {\cal A}_S (X) \},
$ where ${\cal A} _S(X) $ is the  set of all super-hedges associated with initial time $S$ and initial wealth $X$. 
Using equality \eqref{dualS} and similar arguments to those used in 
the proof of Theorem \ref{americano} above, we obtain:
$$u(S)= {\rm ess} \sup_{\nu \in {\cal T}_S} \, \cal{E}^g_{S,\nu }(\xi_\nu) =Y_S \quad {\rm a.s.}$$
where $(Y, Z, K, A)$ is the  solution of 
RBSDE \eqref{RBSDE}. 
\end{remark}
 \begin{remark}\label{arbsel}
 The seller's superhedging price $u_0$ is clearly an {\em upper bound} of the possible prices for the American option. 
Indeed, no  rational agent would pay more than $u_0$
 since there is a cheaper way to achieve at least the same payoff, whatever the exercise time is. 
 Indeed,  by investing the amount $u_0+ \varepsilon$ and  following the  strategy $\varphi^* $, whatever
  the exercise time $\nu$ is, 
 he will make the gain $V^{u_0+ \varepsilon, \varphi^*}_{\nu}  > V^{u_0, \varphi^*}_{\nu} (\geq \xi_{\nu})$ a.s. by a  strict comparison property for deterministic differential equations 

(see \cite{DQS5} in the Appendix).

 Moreover, if the price of the option is equal to $u_0+\varepsilon$, then, by investing this amount following the strategy $\varphi^* $, whatever
  the exercise time $\nu$ is, the seller will make a gain $V_\nu^{u_0+\varepsilon, \varphi^*}-\xi_\nu>0$ a.s. Hence if the price is strictly greater than $u_0$, then there exists an arbitrage for the seller.
 \end{remark}
 
  \begin{remark}
In \cite{KaKou}, it is proved that the {\em seller's superhedging price} of the American option is equal to its {\em $g$-value}
  in  the case of  a higher interest rate for borrowing, 
 by using a dual approach. This approach relies on the convexity properties of the driver and cannot be adapted to our 
 case, except when $g$ is  convex with respect to $(y,z,k)$.
 \end{remark}

Suppose now that the buyer has bought the American option at the selling price $u_0=Y_0$. We address the problem of the choice of her/his exercise time.
 We introduce the following definition.
  \begin{definition} \label{rational exercise}A stopping time $\nu$ $\in {\mathcal T}$ is called a {\em rational exercise time} for the buyer of the American option if it is optimal  for Problem \eqref{optimalstopping}, that is satisfies
 $\sup_{\tau \in {\mathcal T}}{\cal E}_{0,\tau}^g ( \xi_{\tau})= {\cal E}_{0,\nu}^g ( \xi_{\nu})$. \end{definition}

 By the optimality criterium provided in \cite{QS2} (see Proposition 3.5), we have:
\begin{proposition} (Characterization of {\em rational exercise times})\label{II}
Let $\nu$ $\in$ ${\cal T}$. Then, $\nu$ is a a {\em rational exercise time} for the buyer  if and only if $Y_{\nu} = \xi_\nu$ a.s. and $A_{\nu} =0$ a.s.\,, where $(Y,Z,K,A)$
 is the solution of the reflected BSDE \eqref{RBSDE}.
 \end{proposition}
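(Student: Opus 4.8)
The plan is to reduce the statement to the optimality criterion for reflected BSDEs with default jump already established in the cited companion paper \cite{QS2} (Proposition 3.5 there), which characterizes optimal stopping times for the problem $\sup_{\tau \in {\cal T}} {\cal E}^g_{0,\tau}(\xi_\tau)$ through the solution $(Y,Z,K,A)$ of the reflected BSDE \eqref{RBSDE}. Since Proposition \ref{cara} already tells us that $Y_0 = \sup_{\tau \in {\cal T}} {\cal E}^g_{0,\tau}(\xi_\tau)$ and, more generally, that $Y_S = {\rm ess}\sup_{\tau \in {\cal T}_S} {\cal E}^g_{S,\tau}(\xi_\tau)$ for every $S \in {\cal T}$ (equality \eqref{dualS}), the whole proof amounts to transcribing that optimality criterion into the present notation; the key point is that our reflected BSDE with default jump falls within the framework of \cite{QS2}, so no new argument is needed.

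For the sufficiency direction I would argue as follows. Suppose $Y_{\nu} = \xi_\nu$ a.s. and $A_{\nu} = 0$ a.s. Taking the ${\cal E}^g$-evaluation over $[0,\nu]$ in the reflected BSDE and using that $A$ is nondecreasing with $A_0 = 0$, one sees that $(Y_{t \wedge \nu})$ is an ${\cal E}^g$-martingale on $[0,\nu]$: indeed on $[0,\nu]$ the process $A$ is constant (equal to $0$, since $0 \le A_t \le A_\nu = 0$), so $Y_{t\wedge\nu}$ solves a (non-reflected) BSDE with driver $g$ and terminal value $Y_\nu$ at time $\nu$, hence ${\cal E}^g_{0,\nu}(Y_\nu) = Y_0$. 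Combining with $Y_\nu = \xi_\nu$ gives ${\cal E}^g_{0,\nu}(\xi_\nu) = Y_0 = \sup_{\tau} {\cal E}^g_{0,\tau}(\xi_\tau)$, which is precisely the defining property of a rational exercise time.

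For the necessity direction, suppose $\nu$ is a rational exercise time, i.e. ${\cal E}^g_{0,\nu}(\xi_\nu) = Y_0$. From $Y_\nu \geq \xi_\nu$ and the strict monotonicity of ${\cal E}^g$ (guaranteed by Assumption \ref{Royer}), together with the ${\cal E}^g$-supermartingale property of $Y$ (which follows from the reflected BSDE: ${\cal E}^g_{\sigma,\tau}(Y_\tau) = Y_\sigma + {\cal E}^g\text{-correction from } A \leq Y_\sigma$), one gets the chain $Y_0 = {\cal E}^g_{0,\nu}(\xi_\nu) \leq {\cal E}^g_{0,\nu}(Y_\nu) \leq Y_0$, forcing equalities throughout. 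The equality ${\cal E}^g_{0,\nu}(\xi_\nu) = {\cal E}^g_{0,\nu}(Y_\nu)$ combined with $\xi_\nu \leq Y_\nu$ and strict monotonicity yields $Y_\nu = \xi_\nu$ a.s.; the equality ${\cal E}^g_{0,\nu}(Y_\nu) = Y_0$ says $(Y_{t\wedge\nu})$ is an ${\cal E}^g$-martingale on $[0,\nu]$, which, compared with its Doob–Meyer-type decomposition coming from \eqref{RBSDE} (the nondecreasing predictable process being $A_{\cdot\wedge\nu}$), forces $A_{\cdot\wedge\nu} \equiv 0$, in particular $A_\nu = 0$ a.s.

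The main obstacle, and the only genuinely nontrivial point, is justifying the ${\cal E}^g$-(super)martingale characterizations in the presence of the default jump and of a possibly discontinuous $A$: one must make sure that taking ${\cal E}^g$-evaluations commutes correctly with stopping at $\nu$, and that the "martingale iff $A$ vanishes" equivalence is legitimate — this is exactly what is proved in \cite{QS2} under the present assumptions (the martingale representation with respect to $W$ and $M$ from \cite[Lemma 1]{DQS4} and the comparison/strict comparison theorems for BSDEs with default jump being the technical inputs). Rather than reprove this, I would simply invoke \cite[Proposition 3.5]{QS2}, noting that its proof carries over verbatim to our setting since our reflected BSDE with default jump is a special case (a one-dimensional pure-jump compensated martingale $M$ in place of a general Poisson random measure), and that the required structural assumptions (Assumption \ref{Royer}) are in force.
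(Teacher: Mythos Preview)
Your proposal is correct and takes essentially the same approach as the paper: the paper's entire proof is a one-line citation of the optimality criterion in \cite[Proposition 3.5]{QS2}, and you do the same while additionally sketching both directions of the argument. Your sketch is accurate, and your caveat that the default-jump setting is not literally the Poisson-random-measure setting of \cite{QS2} but that the proof carries over verbatim is well placed.
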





  Suppose now that  the price of the American option is equal to the selling price not only at time $0$ but even at each time 
$S \in {\cal T}$, that is, the price at  time $S$ is equal to $u(S) =Y_S$ (see Remark \ref{Important}). From a financial point of view, this makes sense, in particular when the seller is a large investor (see Example \ref{eximp}).
Suppose that the buyer has bought the American option at time $0$ (at the price $u_0=Y_0$). 
Let us show that it is not profitable for the him/her to exercise his/her option at a stopping time $\nu$ which is not  a rational exercise time. First, it is not in his interest to exercise at a time $t$  such that $Y_t > \xi_t$ since  
he would loose a financial asset (the option) 
with value $Y_t$ while he would receive the lower amount $\xi_t$ by exercising the option. 


Second, it is not in the interest of the option holder to exercise at a stopping time $\nu$ greater than 
$\bar \nu$, defined by 
$\bar \nu := \inf \{ s \geq 0 ,\, A_s \neq 0 \}$. Let us show that 
$Y_{\bar \nu}= V^{Y_0, \varphi^*}_{\bar \nu}$ a.s.
Note that by definition of $\bar \nu$, $A_{\bar \nu}=0$ a.s. 
Hence, 
for a.e. $\omega$, the trajectories $t \mapsto Y_t(\omega)$ and  $t \mapsto V_t^{Y_0, \varphi^*}(\omega)$ are solutions on $[0, 
\bar \nu(\omega)]$ of the same differential equation (with initial value $Y_0$), which implies that they are equal, 
by uniqueness of the solution. Hence, $Y_{\bar \nu}= V^{Y_0, \varphi^*}_{\bar \nu}$ a.s.
Without loss of generality, we can suppose that  for each $\omega$, we have $Y_{\bar \nu}  (\omega) = V^{Y_0, \varphi^*}_{\bar \nu} (\omega) $. 
Let $\nu \geq  \bar \nu$. Let $B:=  \{\nu  > \bar \nu \}$. Suppose that $ P(B)>0$. 
Hence, 
$A_{\nu}>0$ a.s. on $B$.
Then, by selling the option at time $\bar \nu $, the option holder receives the amount 
$Y_{\bar \nu}$  which he can invest in the market along the strategy  $\varphi^*$. Since
$Y_{\bar \nu}= V^{Y_0, \varphi^*}_{\bar \nu}$, by the flow property of the forward differential equation \eqref{weaun} with $\varphi = \varphi^*$ and $x= Y_0$, the value of the associated portfolio is equal at time $\nu$ to 
$V^{Y_{\bar \nu}, \varphi^*}_{\nu}= V^{V^{Y_0, \varphi^*}_{\bar \nu}, \varphi^*}_{\nu} =V^{Y_0, \varphi^*}_{\nu}$.
Since $A_{\nu}>0$ on $B$, by the strict comparison result for forward differential equations applied to \eqref{ri} and \eqref{forww} 

(see \cite{DQS5} in the Appendix),

 we get
 $V^{Y_0, \varphi^*}_{ \nu} > Y_{ \nu}$ a.s. on $B$, which implies $V^{Y_{\bar \nu}, \varphi^*}_{\nu}=V^{Y_0, \varphi^*}_{ \nu} > \xi_{ \nu}$ a.s. on $B$. We thus have 
 $V^{Y_{\bar \nu}, \varphi^*}_{\nu} \geq \xi_{ \nu}$ a.s. with $P(V^{Y_{\bar \nu}, \varphi^*}_{\nu} > \xi_{ \nu}) >0$. Hence, at time $\bar \nu$, it is more interesting for the buyer to sell immediately his option than to keep it and to exercise it after.

\begin{proposition}\label{llI}(Existence of {\em rational exercise times})
Suppose  that the payoff $\xi$ is left u.s.c. along 
stopping times.
 Let $\nu^* := \inf \{ s \geq 0 ,\, Y_s = \xi_s \}$ and $\bar \nu := \inf \{ s \geq 0 ,\, A_s \neq 0 \}$. \\
The stopping time $\nu^*$ (resp. $\bar \nu$) is the  minimal (resp. maximal) rational exercise time.
\end{proposition}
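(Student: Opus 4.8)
The plan is to characterize $\nu^*$ and $\bar\nu$ as rational exercise times via Proposition \ref{II}, i.e. by checking the two conditions $Y_\nu = \xi_\nu$ a.s.\ and $A_\nu = 0$ a.s., and then to verify minimality/maximality within the class of all rational exercise times. First I would treat $\nu^*$. By right-continuity of $Y$ and $\xi$ together with the left upper semicontinuity assumption on $\xi$ along stopping times, one shows that $Y_{\nu^*} = \xi_{\nu^*}$ a.s.; the right-continuity gives $Y_{\nu^*}\le\xi_{\nu^*}$ in the limit (being careful at the points where the hitting is not attained from the right), while $Y\ge\xi$ always gives the reverse inequality, so equality holds. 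Next, since $\xi$ is left u.s.c.\ along stopping times, Lemma \ref{Ac} applies and $A$ is continuous; moreover on $[0,\nu^*)$ one has $Y>\xi$, so the Skorokhod condition $\int_0^T (Y_t-\xi_t)\,dA_t^c = 0$ (recall $A=A^c$ here) forces $A$ to be constant on $[0,\nu^*]$, hence $A_{\nu^*} = A_0 = 0$ a.s. By Proposition \ref{II}, $\nu^*$ is a rational exercise time. Minimality is immediate: if $\nu$ is any rational exercise time, then $Y_\nu=\xi_\nu$ a.s., so by definition of $\nu^*$ as the first hitting time of $\{Y=\xi\}$ we have $\nu^*\le\nu$ a.s.

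For $\bar\nu$, continuity of $A$ (Lemma \ref{Ac}) and $A_0=0$ give $A_{\bar\nu}=0$ a.s.\ by definition of $\bar\nu$ as the first time $A$ leaves $0$. It remains to check $Y_{\bar\nu}=\xi_{\bar\nu}$ a.s. Here I would argue on $\{\bar\nu < T\}$: after $\bar\nu$ the process $A$ increases, and the Skorokhod condition $\int_0^T(Y_t-\xi_t)\,dA_t = 0$ means $A$ can only increase on the set $\{Y=\xi\}$; hence for any $\varepsilon>0$ there exist times arbitrarily close to $\bar\nu$ from the right at which $Y=\xi$, and by right-continuity of both processes $Y_{\bar\nu}=\xi_{\bar\nu}$ a.s.\ on $\{\bar\nu<T\}$. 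On $\{\bar\nu = T\}$ we use the terminal condition $Y_T=\xi_T$. Thus $Y_{\bar\nu}=\xi_{\bar\nu}$ a.s., and by Proposition \ref{II}, $\bar\nu$ is a rational exercise time. For maximality, let $\nu$ be any rational exercise time; then $A_\nu=0$ a.s., so $\nu\le\bar\nu$ a.s.\ by definition of $\bar\nu$ as the first time $A$ becomes nonzero (using that $A$ is nondecreasing, $\{A_\nu=0\}\subseteq\{\nu\le\bar\nu\}$ up to a null set).

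The main obstacle is the careful handling of the hitting-time/limit arguments identifying $Y$ with $\xi$ at $\nu^*$ and at $\bar\nu$: one must rule out, using left u.s.c.\ along stopping times (and the resulting continuity of $A$, plus right-continuity of $Y$), the possibility that $Y$ strictly exceeds $\xi$ at these stopping times, which is exactly the kind of subtlety that distinguishes the RCLL setting from the continuous one. Everything else is a direct application of Proposition \ref{II}, Lemma \ref{Ac}, and the Skorokhod conditions \eqref{sko}. I would also remark that these arguments mirror the classical optimal stopping characterization (as in \cite{EPardQ,QS2}), now read through the nonlinear $\mathcal{E}^g$-framework via the dual representation \eqref{dualS}.
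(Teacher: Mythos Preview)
Your proposal is correct and follows essentially the same approach as the paper: verify the two conditions of Proposition \ref{II} for each of $\nu^*$ and $\bar\nu$, using Lemma \ref{Ac} for the continuity of $A$ and the Skorokhod conditions, and then read off minimality/maximality from the definitions. One small remark: for the identity $Y_{\nu^*}=\xi_{\nu^*}$ you do not actually need the left upper semicontinuity of $\xi$; right-continuity of $Y$ and $\xi$ alone suffices (take a sequence $s_n\downarrow\nu^*$ with $Y_{s_n}=\xi_{s_n}$ and pass to the limit), which is exactly what the paper does. Your explicit minimality/maximality arguments and your case split on $\{\bar\nu<T\}$ versus $\{\bar\nu=T\}$ add detail that the paper leaves implicit, but the underlying logic is the same.
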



\begin{proof}
The right continuity of $(Y_t)$ and $(\xi_t)$ ensures that $Y_{\nu^*} = \xi_{\nu^*}$ a.s. 
By definition of $\nu^*$, we have $Y_t >\xi_t$ a.s. on $[0, \nu^*[$. Hence the process $A$ is constant on $[0, \nu^*[$ 
and even on $[S, \nu^*]$ because $A$ is continuous (see Lemma \ref{Ac}).

By Proposition \ref{II}, $\nu^*$ is thus a rational exercise time  and is the  minimal one.
From the definition of $\bar \nu$,  and the continuity of $A$, we have 
$ A_{\bar \nu}  =0 $ a.s. 
Also, we have a.s. for all 
$ t > \bar \nu,  \; A_t >  A_{\bar \nu} = 0.$
Since $A$ increases only on the set $\{ Y_\cdot = \xi_\cdot\}$, it follows that 
$Y_{\bar \nu} = \xi_{\bar \nu}.$  By Proposition \ref{II}, 
 $\bar{\nu}$ is a rational exercise time and is the  maximal one.
 \end{proof}

 
When $\xi$ is only RCLL, there does not exist necessarily an {\em rational} exercise time for the buyer. However, we have the following result.
%

\begin{proposition}\label{varep} (Existence of $\varepsilon$-{\em rational} exercise time) 
Suppose 
$\xi$ is RCLL. 
For each $\varepsilon>0$, the stopping time 
$\nu_{\varepsilon}:= \inf \{t \geq 0: \,\,\, Y_t \leq \xi_t+\varepsilon \}$ satisfies
 \begin{equation}\label{fifi}
 \sup_{\tau \in {\mathcal T}}{\cal E}_{0,\tau}^g ( \xi_{\tau}) \,\,\leq \,\,  {\cal E}_{0,\nu_{\varepsilon}}^g ( \xi_{\nu_{\varepsilon}})  + K 
 \varepsilon \quad   \rm{a.s.}\,,
 \end{equation}
where $K$ is a constant which only depends on $T$ and the Lipschitz constant  $C$ of $g$.
In other words, $\nu_{\varepsilon}:= \inf \{t \geq 0: \,\,\, Y_t \leq \xi_t+\varepsilon \}$ is a 
$K\varepsilon$-{\em rational} exercise time.

\end{proposition}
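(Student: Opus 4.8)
The plan is to adapt the classical $\varepsilon$-penalization / approximation argument for optimal stopping to the nonlinear $g$-evaluation setting, using the characterization $Y_S = \esssup_{\tau \in \mathcal{T}_S}\mathcal{E}^g_{S,\tau}(\xi_\tau)$ from \eqref{dualS} together with the reflected BSDE \eqref{RBSDE}. First I would fix $\varepsilon>0$ and set $\nu_\varepsilon := \inf\{t\geq 0 : Y_t \leq \xi_t + \varepsilon\}$; by right-continuity of $Y$ and $\xi$ one has $Y_{\nu_\varepsilon} \leq \xi_{\nu_\varepsilon} + \varepsilon$ a.s., and on $[0,\nu_\varepsilon)$ one has $Y_t > \xi_t + \varepsilon \geq \xi_t$, so by the Skorokhod conditions \eqref{sko} the process $A$ is constant on $[0,\nu_\varepsilon)$ (and, because of the jump condition $\Delta A^d_t = \Delta A^d_t \mathbf{1}_{\{Y_{t^-}=\xi_{t^-}\}}$, also $\Delta A_{\nu_\varepsilon}=0$, hence $A_{\nu_\varepsilon}=A_0=0$ a.s.). Thus on the stochastic interval $[0,\nu_\varepsilon]$ the quadruple $(Y,Z,K,A)$ solves the (non-reflected) BSDE with driver $g$ and terminal data $Y_{\nu_\varepsilon}$ at time $\nu_\varepsilon$; that is, $Y_0 = \mathcal{E}^g_{0,\nu_\varepsilon}(Y_{\nu_\varepsilon})$.

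Next I would compare $Y_0 = \mathcal{E}^g_{0,\nu_\varepsilon}(Y_{\nu_\varepsilon})$ with $\mathcal{E}^g_{0,\nu_\varepsilon}(\xi_{\nu_\varepsilon})$. Since $Y_{\nu_\varepsilon} \leq \xi_{\nu_\varepsilon} + \varepsilon$ a.s., monotonicity of $\mathcal{E}^g$ (Remark~\ref{prixnul} / the comparison theorem) gives $Y_0 \leq \mathcal{E}^g_{0,\nu_\varepsilon}(\xi_{\nu_\varepsilon} + \varepsilon)$. The remaining point is to absorb the additive constant $\varepsilon$ at the cost of a multiplicative factor depending only on $T$ and $C$: by the standard a priori estimate / Gronwall argument for BSDEs with Lipschitz driver (the $\lambda$-admissibility condition \eqref{lip} with constant $C$), one has
\begin{equation*}
\mathcal{E}^g_{0,\nu_\varepsilon}(\xi_{\nu_\varepsilon} + \varepsilon) - \mathcal{E}^g_{0,\nu_\varepsilon}(\xi_{\nu_\varepsilon}) \leq K\varepsilon \quad \text{a.s.},
\end{equation*}
with $K$ depending only on $T$ and $C$ (e.g. $K = e^{CT}$, coming from linearizing the driver and bounding the resulting exponential weight; here one uses that the driver is Lipschitz in $y$ uniformly and that the perturbation is a deterministic constant, so the $z$- and $k$-dependence drops out of the difference). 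Combining, $Y_0 \leq \mathcal{E}^g_{0,\nu_\varepsilon}(\xi_{\nu_\varepsilon}) + K\varepsilon$. Finally, by Proposition~\ref{cara} (equality \eqref{dualrepresentation}), $Y_0 = \sup_{\tau \in \mathcal{T}} \mathcal{E}^g_{0,\tau}(\xi_\tau)$, which yields \eqref{fifi}.

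The main obstacle is the second paragraph's quantitative step: one must make sure that replacing $\xi_{\nu_\varepsilon}$ by $\xi_{\nu_\varepsilon}+\varepsilon$ inside $\mathcal{E}^g$ costs only $O(\varepsilon)$ with a constant independent of $\varepsilon$ and of the stopping time $\nu_\varepsilon$. This is a routine consequence of the a priori estimates for BSDEs with default jump (cf. the estimates used in the proof of Proposition~\ref{existence} and Lemma~\ref{est}), but it requires care because $\nu_\varepsilon$ is itself $\varepsilon$-dependent and random; the point is that the Lipschitz constant $C$ and horizon $T$ are uniform, so the Gronwall bound $K=e^{CT}$ is too. I would also double-check the subtle step that $A_{\nu_\varepsilon}=0$ a.s.: this uses both that $A$ does not increase off $\{Y=\xi\}$ (so no continuous increase before $\nu_\varepsilon$) and the Skorokhod jump condition to rule out a jump of $A$ exactly at $\nu_\varepsilon$ when $Y_{\nu_\varepsilon^-} > \xi_{\nu_\varepsilon^-}$; if $\xi$ is merely RCLL and not left-u.s.c., $A$ need not be continuous, but the jump condition still forces $\Delta A_{\nu_\varepsilon}=0$ on $\{Y_{\nu_\varepsilon^-} > \xi_{\nu_\varepsilon^-}\}$, and on the complement one argues directly. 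Everything else is bookkeeping with the flow property of BSDEs and the definition of $\mathcal{E}^g$.
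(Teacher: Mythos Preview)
Your proof plan is correct and is precisely the argument one finds in the reference the paper cites: the paper does not give its own proof here but simply refers the reader to Theorem~3.2 in \cite{QS2}, whose proof proceeds exactly along the lines you sketch (show $A_{\nu_\varepsilon}=0$ via the Skorokhod conditions so that $Y_0=\mathcal{E}^g_{0,\nu_\varepsilon}(Y_{\nu_\varepsilon})$, then use $Y_{\nu_\varepsilon}\leq \xi_{\nu_\varepsilon}+\varepsilon$ together with monotonicity and the Lipschitz a~priori estimate for BSDEs to absorb the additive $\varepsilon$). Your handling of the two delicate points---that $\Delta A_{\nu_\varepsilon}=0$ even when $A$ may jump (because $Y_{\nu_\varepsilon^-}\geq \xi_{\nu_\varepsilon^-}+\varepsilon>\xi_{\nu_\varepsilon^-}$ from the strict inequality on $[0,\nu_\varepsilon)$), and that the constant $K$ is uniform in the random stopping time---is accurate.
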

For the proof, the reader is referred to Theorem 3.2 in \cite{QS2}. 

\section{The buyer's point of view}\label{SECB}
Let us consider the pricing and hedging problem of a European option with maturity $T$ and payoff $\xi \in L^2({\cal G}_T)$ from 
the buyer's point of view. Supposing the initial price of the option is $z$, he starts with the amount $-z$ at time $t=0$, and looks to find a risky-assets strategy $\tilde \varphi$ such that the payoff  that he receives at time $T$ allows him to recover the debt he incurred at time $t=0$ by buying the option, that is such that 
$V^{-z, \tilde \varphi}_T + \xi=0 \quad {\rm a.s.}\,$
or equivalently, $V^{-z, \tilde \varphi}_T =- \xi$ a.s.\,

The buyer's superhedging price of the option is thus equal to 
the opposite of the seller's superhedging price of the  option with payoff  $-\xi$, that is $-{\cal E}_{0,T}^{^{g}} (-\xi)=- \tilde X_0$, where
$( \tilde X,  \tilde Z, \tilde K)$ is the solution of the BSDE associated with driver $g$ and terminal condition $-\xi$. Let us specify the hedging strategy for the buyer. Suppose that the initial price of the option is $z:= - \tilde X_0$. The process $\tilde X$ is equal to  the value of the portfolio associated with initial value $-z= \tilde X_0$ and
strategy $\tilde \varphi $ $:= \Phi  
( \tilde Z,\tilde K)$ (where $\Phi$ is defined in Definition \ref{stbis})  that is
 $\tilde X= V^{\tilde X_0, \tilde \varphi}= V^{-z, \tilde \varphi}$. Hence, $V^{-z, \tilde \varphi}_T = \tilde X_T= -\xi$ a.s.\,, which yields that  $\tilde \varphi$ 
is the hedging risky-assets strategy for the buyer. Similarly, $-{\cal E}_{t,T}^{^{g}} (-\xi)=- \tilde X_t$ satisfies an analogous property at time $t$, and is is called the {\em hedging price for the buyer}
 at time $t$.


This leads to the {\em nonlinear pricing system $\tilde {\cal E}^{^{g}}$ relative to the buyer} in the market ${\cal M}^g$
 defined for each $(S, \xi) \in [0,T]\times L^2({\cal G}_S)$ by 
 \begin{equation}\label{tildeE}
 \tilde {\cal E}^{^{g}}_{\cdot, S}(\xi):=
-{\cal E}^{^{g}}_{\cdot, S} (-\xi).
\end{equation}
%
%

%
\begin{remark} \label{perfectegal}
%

Note that $\tilde {\cal E}^{^{g}}_{\cdot, S}(\xi)$ is equal to the solution of the BSDE with driver $-g(t,-y,-z,-k)$ and terminal condition $\xi$. Hence, if we suppose that 
$-g(t,-y,-z,-k) \leq g(t,y,z,k)$ (which is satisfied if, for example, $g$ is convex with respect to $(y,z,k)$), then, by the comparison theorem for BSDEs, we have  $\tilde {\cal E}^{^{g}}_{\cdot, S}(\xi)= -{\cal E}^{^{g}}_{\cdot, S} (-\xi) \leq  {\cal E}^{^{g}}_{\cdot, S}(\xi)$  for each $(S, \xi) \in [0,T]\times L^2({\cal G}_S)$.

Moreover, when $-g(t,-y,-z,-k) = g(t,y,z,k)$ (which is satisfied if, for example, $g$ is linear with respect to $(y,z,k)$, as in the perfect market case), we have $\tilde {\cal E}^{^{g}}=  {\cal E}^{^{g}}$. 
%
\end{remark}

We now introduce the definition of the buyer's superhedging price of the American option.
\begin{definition}
The buyer's superhedging price of the American option with payoff $\xi$ is defined as \footnote{
We have $(0,0) \in {\cal B}(\xi_0)$. Hence, $\tilde u_0\geq \xi_0$.
Moreover, if $g(t,0,0,0)=0$ and $\xi_0\geq 0$, then 
$\tilde u_0= \sup\{x \geq 0, \,\,\, \exists (\tau, \varphi) \in\mathcal{B}(x)\}$.} 
$$v_0=\sup\{x \in \mathbb R, \,\,\, \exists (\tau, \phi) \in \mathcal{B}(x)\},$$
where $\mathcal{B}(x):= \{(\tau, \varphi) \in \mathcal{T} \times \mathbb{H}_2 \times \mathbb{H}_2^\nu \text{ such that } V_\tau^{-x, \varphi}+\xi_\tau \geq 0 \text{ a.s. }\}.$
\end{definition}
%

Let us now provide a characterization of this price which requires an additional regularity assumption on the payoff.

\begin{proposition} \label{americano}
Suppose that $(\xi_t)$ is left upper semicontinuous along stopping times. 
The buyer's superhedging price $v_0$ of the American option satisfies: 
\begin{equation}\label{optimalstopping}
v_0=
-\inf_{\nu \in {\mathcal T}}{\cal E}_{0,\nu}^g ( -\xi_{\nu}).
\end{equation}
 Moreover, $v_0=-\Tilde{Y}_0$, where $(\Tilde{Y},\Tilde{Z},\Tilde{K}, \Tilde{A})$
 is the solution of the  reflected BSDE associated with driver $g$ and upper obstacle $-\xi$, that is,
\begin{align}
   &-d\Tilde{Y}_t = g(t,\Tilde{Y}_t,  \Tilde{Z}_t, \Tilde{K}_t )dt -d\Tilde A_t - \Tilde{Z}_t  dW_t -  \Tilde{K}_t dM_t; \; \; \Tilde{Y}_T = -\xi_T, \;    \label{T} 
   \text{with} &  \\
 &  \,\,\,\,\Tilde{Y} \leq -\xi \,,\,\,  \displaystyle   \int_0^T (\Tilde{Y}_t + \xi_t) d\Tilde A_t = 0 \text{ a.s. }\,,
  \nonumber
\end{align}
where the non decreasing process $\Tilde A$ is continuous.
\footnote{This property follows from the left regularity assumption made on $\xi$. It can be shown by using similar arguments as those used in Remark \ref{Ac}.
}.
Let $\Tilde{\tau}:=\inf \{t \geq 0:\,\,\, \Tilde{Y}_t=-\xi_t   \}$ and $\Tilde{\varphi}:= \Phi(\Tilde{Z},\Tilde{K})$ (where $\Phi$ is defined in Definition \ref{stbis}). 
We have $(\Tilde{\tau}, \Tilde{\varphi})$ $\in {\cal B}(v_0)$.

\end{proposition}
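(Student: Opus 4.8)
The plan is to mirror, in the buyer's framework, the two-sided argument used in the proof of Proposition~\ref{americano} for the seller. We set $v_0' := -\inf_{\nu \in {\cal T}} {\cal E}_{0,\nu}^g(-\xi_\nu)$, and note first that by Proposition~\ref{cara} (applied to the lower obstacle $-\xi$) together with the left-u.s.c.\ assumption, the reflected BSDE \eqref{T} with upper obstacle $-\xi$ has a unique solution $(\Tilde Y,\Tilde Z,\Tilde K,\Tilde A)$ with $\Tilde A$ continuous (this follows by applying Proposition~\ref{cara} and Lemma~\ref{Ac} to $-\xi$, whose left-u.s.c.\ property is equivalent to left regularity of $\xi$ along stopping times), and $\Tilde Y_0 = \inf_{\nu \in {\cal T}} {\cal E}_{0,\nu}^g(-\xi_\nu)$, i.e.\ $v_0' = -\Tilde Y_0$. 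So it suffices to prove $v_0 = -\Tilde Y_0$ and $(\Tilde\tau,\Tilde\varphi) \in {\cal B}(v_0)$.

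For the inequality $v_0 \le -\Tilde Y_0$, I would take any $x$ with $(\tau,\varphi) \in {\cal B}(x)$, so $V_\tau^{-x,\varphi} + \xi_\tau \ge 0$ a.s., i.e.\ $V_\tau^{-x,\varphi} \ge -\xi_\tau$ a.s. Applying the ${\cal E}^g$-evaluation, using monotonicity of ${\cal E}^g$ and the ${\cal E}^g$-martingale property of the wealth process (Proposition~\ref{rima}), gives $-x = {\cal E}_{0,\tau}^g(V_\tau^{-x,\varphi}) \ge {\cal E}_{0,\tau}^g(-\xi_\tau) \ge \inf_{\nu\in{\cal T}} {\cal E}_{0,\nu}^g(-\xi_\nu) = \Tilde Y_0$, hence $x \le -\Tilde Y_0$; taking the supremum over such $x$ yields $v_0 \le -\Tilde Y_0$.

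For the reverse inequality, and simultaneously the existence of a buyer's superhedge, I would show $(\Tilde\tau,\Tilde\varphi) \in {\cal B}(-\Tilde Y_0)$, which forces $v_0 \ge -\Tilde Y_0$. Set $x = -\Tilde Y_0$, so the buyer starts with $-x = \Tilde Y_0$. Following \eqref{weaun}--\eqref{wea}, for a.e.\ $\omega$ the trajectory $t\mapsto V_t^{\Tilde Y_0,\Tilde\varphi}(\omega)$ solves the forward equation with driver $g$ and initial value $\Tilde Y_0$, while $\Tilde Y$ solves the forward equation obtained from \eqref{T}, namely $\Tilde Y_t = \Tilde Y_0 - \int_0^t g(s,\Tilde Y_s,\Tilde Z_s,\Tilde K_s)\,ds + \int_0^t \Tilde Z_s\,dW_s + \int_0^t \Tilde K_s\,dM_s + \Tilde A_t$ (the sign of $d\Tilde A$ in \eqref{T} makes $\Tilde A$ \emph{add} in the forward form). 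Since $\Tilde A$ is nondecreasing, a comparison result for forward differential equations (as in the Appendix, cf.\ \cite{DQS5}) applied $\omega$ by $\omega$ gives $V_t^{\Tilde Y_0,\Tilde\varphi} \le \Tilde Y_t$ for all $t$, a.s. Because $\Tilde Y_\cdot \le -\xi_\cdot$, we get $V_t^{\Tilde Y_0,\Tilde\varphi} \le -\xi_t$ for all $t$, so in particular $V_{\Tilde\tau}^{-x,\Tilde\varphi} + \xi_{\Tilde\tau} \le 0$. To obtain the required inequality $V_{\Tilde\tau}^{-x,\Tilde\varphi} + \xi_{\Tilde\tau} \ge 0$, I use that at $\Tilde\tau$ the process $\Tilde A$ has not yet increased (by definition of $\Tilde\tau$ as the first hitting time of the obstacle, together with continuity of $\Tilde A$ and the Skorokhod condition forcing $\Tilde A$ to be constant before $\Tilde\tau$), so on $[0,\Tilde\tau]$ the trajectories of $V^{\Tilde Y_0,\Tilde\varphi}$ and $\Tilde Y$ satisfy the same forward ODE with the same initial condition and hence coincide by uniqueness; thus $V_{\Tilde\tau}^{-x,\Tilde\varphi} = \Tilde Y_{\Tilde\tau} = -\xi_{\Tilde\tau}$ a.s., where the last equality uses right-continuity of $\Tilde Y$ and $\xi$ at the hitting time $\Tilde\tau$. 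Therefore $V_{\Tilde\tau}^{-x,\Tilde\varphi} + \xi_{\Tilde\tau} = 0 \ge 0$ a.s., i.e.\ $(\Tilde\tau,\Tilde\varphi) \in {\cal B}(x) = {\cal B}(-\Tilde Y_0)$, giving $v_0 \ge -\Tilde Y_0$ and completing the proof that $v_0 = -\Tilde Y_0 = v_0'$.

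The main obstacle is the bookkeeping around the sign of the reflecting process $\Tilde A$ and the direction of the comparison: one must be careful that with an \emph{upper} obstacle the Skorokhod term enters \eqref{T} with a minus sign, so in forward form it pushes $\Tilde Y$ \emph{up}, which is exactly what makes $V^{\Tilde Y_0,\Tilde\varphi} \le \Tilde Y \le -\xi$ (the inequality the buyer needs, with a sign opposite to the seller's case). The other delicate point, needed to rule out strict inequality at $\Tilde\tau$, is the claim that $\Tilde A$ is constant on $[0,\Tilde\tau]$; this requires the continuity of $\Tilde A$ (from the left-u.s.c.\ assumption, via the analogue of Lemma~\ref{Ac} for the upper obstacle) plus the Skorokhod condition $\int_0^T (\Tilde Y_t + \xi_t)\,d\Tilde A_t = 0$, which confines the support of $d\Tilde A$ to $\{\Tilde Y_\cdot = -\xi_\cdot\}$, a set not entered before $\Tilde\tau$. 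Everything else is a routine transcription of the seller-side arguments under the substitution $\xi \rightsquigarrow -\xi$, $x \rightsquigarrow -x$.
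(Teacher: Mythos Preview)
Your proof is correct and follows essentially the same route as the paper's: the inequality $v_0\le -\Tilde Y_0$ via monotonicity and the ${\cal E}^g$-martingale property of the wealth, and $v_0\ge -\Tilde Y_0$ by showing that on $[0,\Tilde\tau]$ the reflecting process $\Tilde A$ vanishes so that $V^{\Tilde Y_0,\Tilde\varphi}$ and $\Tilde Y$ coincide and hence $V^{\Tilde Y_0,\Tilde\varphi}_{\Tilde\tau}=-\xi_{\Tilde\tau}$. Two minor points: the forward comparison step yielding $V^{\Tilde Y_0,\Tilde\varphi}\le\Tilde Y\le -\xi$ is superfluous (it gives the wrong-signed inequality and the paper omits it entirely, going straight to the equality on $[0,\Tilde\tau]$); and your justification of the continuity of $\Tilde A$ is slightly garbled---the change of variables $(-\Tilde Y,-\Tilde Z,-\Tilde K,\Tilde A)$ turns the upper-obstacle RBSDE into a lower-obstacle one with obstacle $\xi$ (not $-\xi$), so Lemma~\ref{Ac} applies directly to $\xi$, which is left-u.s.c.\ by hypothesis.
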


\begin{proof}

Let us first note that $\inf_{\nu }{\cal E}^{g}_{0,\nu} ( -\xi_{\nu})$ is characterized as the solution of the reflected BSDE \eqref{RBSDE} (by Proposition \ref{cara}).
We thus have to show that $v_0 =-\Tilde{Y}_0$.\\
Set
$\mathcal{S}:= \{ x \in \mathbb{R}: \exists (\tau,\varphi) \in \mathcal{B}(x) \}$. 

Let us first show that $-\Tilde{Y}_0 \leq v_0$. Since $v_0= \sup \mathcal{S}$, 
it is sufficient to show that $-\Tilde{Y}_0 \in \mathcal{S}$. To this aim, we prove that 
\begin{equation}\label{bbb}
(\Tilde \tau,\Tilde \varphi) \in \mathcal{B}
(-\Tilde{Y}_0).
\end{equation}
We consider the portfolio associated with the initial capital ${\Tilde{Y}_0}$ and the strategy 
$\Tilde{\varphi}$. Its value $V^{\Tilde{Y}_0,\Tilde{\varphi}}$ satisfies:
$$
V_t^{\Tilde{Y}_0, \Tilde{\varphi}} = \Tilde{Y}_0-\int_0^t 
g(s,V_s^{\Tilde{Y}_0, \Tilde{\varphi}},\Tilde{Z}_s,\Tilde{K}_s)ds + \int_0^t \Tilde{Z}_s dW_s +\int_0^t \Tilde{K}_s dM_s, \,\, 
0 \leq t \leq T.
 $$


Now, by definition of $\Tilde{\tau}$, we have that almost surely on $[0,\Tilde{\tau}[$, $\Tilde{Y}_t<-\xi_t$, which implies that the process $\Tilde A$ is constant on $[0,\Tilde{\tau}[$. Since $\Tilde A$ is continuous, we derive that $\Tilde A$ is equal to $0$ on $[0,\Tilde{\tau}]$.
We thus get the equality $V_{\Tilde{\tau}}^{\Tilde{Y}_0, \Tilde{\varphi}}= \Tilde{Y}_{\Tilde{\tau}}$ a.s. Moreover, the right-continuity of the processes $(\Tilde{Y}_t)$ and $(\xi_t)$ ensures that $\Tilde{Y}_{\Tilde{\tau}}= -\xi_{\Tilde{\tau}} \text{ a.s. }$
We thus conclude that
$V_{\Tilde{\tau}}^{\Tilde{Y}_0, \Tilde{\varphi}}=-\xi_{\Tilde{\tau}}$ a.s.\,
The desired property \eqref{bbb} follows.
Hence, we have $-\Tilde{Y}_0$ $\in$ $\mathcal{S}$, and thus $-\Tilde{Y}_0 \leq v_0$.

 It remains to prove that $v_0 \leq -\Tilde{ Y}_0$.
Let $x \in \mathcal{S}$. By definition, there exists $(\tau,\varphi) \in \mathcal{B}(x)$ such that 
$V^{-x, \varphi}_{t} \geq -\xi_\tau $ a.s. 
By taking the $\mathcal{E}^g$-evaluation, using the monotonicity 
of $\mathcal{E}^g$ and the $\mathcal{E}^g$-martingale property of the wealth process $V^{-x, \varphi}$,
we derive that
$
-x =\mathcal{E}^g _{0,\tau }(V^{-x, \varphi}_{\tau}) \geq 
\mathcal{E}_{0, \tau }^g (-\xi_\tau).
$
We thus obtain the inequality
$
x \leq 
-\inf_{\tau \in \mathcal{T}} \mathcal{E}_{0, \tau }^g (-\xi_\tau)=-\Tilde{Y}_0,
$
which holds for any $x \in \mathcal{S}$. By taking the supremum over $x \in \mathcal{S}$, 
we get $v_0 \leq -\Tilde{Y}_0$. It follows  that $v_0 =-\Tilde{ Y}_0$. By \eqref{bbb}, we get $(\Tilde{\tau},\Tilde{\varphi}) \in \mathcal{B}(v_0)$, which completes the proof.
\end{proof}

 Note that if the price of the option is equal to $v_0-\varepsilon$, then, by investing the amount $-v_0+\varepsilon$ following the strategy $\Tilde{\varphi}$, then, whatever
  the exercise time $\nu$ chosen, the buyer will make a gain $V_\nu^{-v_0+\varepsilon, \Tilde{\varphi}}+\xi_\nu>V_\nu^{-v_0, \Tilde{\varphi}}+\xi_\nu \geq 0$ a.s. Hence if the price is strictly smaller than $v_0$, then there exists an arbitrage for the buyer.
 
 Suppose now that 
$-g(t,-y,-z,-k) \leq g(t,y,z,k)$. By Remark \ref{perfectegal}, we get $v_0 \leq u_0$. By Remark \ref{arbsel}, we derive that the interval $[v_0,u_0]$ can be seen as a non-arbitrage interval for the price of the American option in the sense of \cite{KaKou}. In the example of a higher interest rate for borrowing, this result corresponds to the one shown in \cite{KaKou} by a dual approach.
\footnote{Note that in the particular case of a perfect market, the buyer's superhedging price is equal to the seller's superhedging price, that is $v_0=u_0$, since, in this case, the $g$-evaluation $\mathcal{E}^g$ is linear.}

\section{Appendix}
We give here some a priori estimates for RBSDEs with default jump.
\begin{lemma}[A priori estimate for RBSDEs] \label{est}
Let $f^1$ be a $\lambda$-{\em admissible} driver with $\lambda$-constant $C$ and let  $f^2$ be a driver. 
Let $\xi$   be an adapted RCLL processes.
 For $i=1,2$, let $(Y^i, Z^i ,K^i, A^i)$  be  a solution of the RBSDE associated with  
terminal time $T$, driver $f^i$  and obstacle $\xi$. 
 Let $ \eta, \beta >0 $ be such that 
 $\beta \geq \frac{3}{\eta} +2C $ 
and $\eta \leq \frac{1}{C^2}$. \\
Let $\bar f(s): = f^1(s, Y^2_s, Z^2_s, K_s^2) - f^2(s, Y^2_s, Z^2_s, K_s^2)$.
For each $t \in [0,T]$, we then have
\begin{equation}\label{A26}
e^{\beta  t} (Y^1_s - Y^2_s)   ^2 \leq   \eta \,{\mathbb E}[ \int_t^T e^{\beta  s} \bar f(s)^2  ds \mid 
{\cal G}_t ] \;\; \text{ \rm a .s.}\, 
\end{equation}
Moreover, 
$\|\bar Y \|_\beta^2 \leq T  \eta
\|\bar f \|_\beta^2,$ and if $\eta < \frac{1}{C^2}$, we then have 
$\|\bar Z \|_\beta^2 + \|\bar K \|_{\lambda,\beta}^2
\leq \frac{\eta}{1 - \eta C^2}  \|\bar f \|_\beta^2.$
\end{lemma}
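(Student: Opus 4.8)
The plan is to follow the classical linearization-and-Itô argument used for a priori estimates of BSDEs, carefully handling the extra non-decreasing processes $A^1, A^2$ coming from the reflection. Write $\bar Y := Y^1 - Y^2$, $\bar Z := Z^1 - Z^2$, $\bar K := K^1 - K^2$, $\bar A := A^1 - A^2$. First I would apply Itô's formula to $e^{\beta s}\bar Y_s^2$ between $t$ and $T$. Since $\bar Y_T = \xi_T - \xi_T = 0$, the terminal term vanishes, and one obtains, after taking conditional expectation given $\mathcal G_t$ (the local-martingale parts from $dW$ and $dM$ being true martingales by the integrability in $\mathcal S^2 \times \mathbb H^2 \times \mathbb H^2_\lambda$),
\begin{align*}
e^{\beta t}\bar Y_t^2 + \beta \,{\mathbb E}\Big[\int_t^T e^{\beta s}\bar Y_s^2\,ds \mid \mathcal G_t\Big] + {\mathbb E}\Big[\int_t^T e^{\beta s}(|\bar Z_s|^2 + |\bar K_s|^2\lambda_s)\,ds \mid \mathcal G_t\Big]
= 2\,{\mathbb E}\Big[\int_t^T e^{\beta s}\bar Y_s\big(f^1(s,Y^1_s,Z^1_s,K^1_s) - f^2(s,Y^2_s,Z^2_s,K^2_s)\big)\,ds \mid \mathcal G_t\Big]
- 2\,{\mathbb E}\Big[\int_t^T e^{\beta s}\bar Y_s\, d\bar A_s \mid \mathcal G_t\Big].
\end{align*}
(One must also track the jump term $\sum e^{\beta s}(\Delta \bar A_s)^2$, which has a favorable sign and can be discarded.)

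The key point is that the reflection term is non-positive: I claim ${\mathbb E}\big[\int_t^T e^{\beta s}\bar Y_s\, d\bar A_s\big] \geq 0$. Indeed, by the Skorokhod conditions \eqref{sko}, $A^i$ increases (both continuous and jump parts) only where $Y^i_{\cdot} = \xi_\cdot$ (resp. $Y^i_{\cdot^-} = \xi_{\cdot^-}$), and $Y^i \geq \xi$; hence on $dA^1$ one has $Y^1 = \xi \leq Y^2$ so $\bar Y \leq 0$, giving $-\bar Y\,dA^1 \geq 0$, and symmetrically $\bar Y\,dA^2 \geq 0$, so $-\bar Y\,d\bar A = -\bar Y\,dA^1 + \bar Y\,dA^2 \geq 0$. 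This is the one genuinely RBSDE-specific step; everything else is the standard BSDE estimate. The discontinuous parts require the same argument at predictable jump times using the left-limit version of the Skorokhod condition.

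Having dropped the reflection term, I split the driver difference as $f^1(s,Y^1,Z^1,K^1) - f^2(s,Y^2,Z^2,K^2) = \big(f^1(s,Y^1,Z^1,K^1) - f^1(s,Y^2,Z^2,K^2)\big) + \bar f(s)$, bound the first bracket by the $\lambda$-Lipschitz estimate \eqref{lip} with constant $C$, namely $\leq C(|\bar Y_s| + |\bar Z_s| + \sqrt{\lambda_s}\,|\bar K_s|)$, and apply Young's inequality $2ab \leq \tfrac{1}{\eta}a^2 + \eta b^2$ term by term to absorb $|\bar Z|^2$, $|\bar K|^2\lambda$ and part of $|\bar Y|^2$ into the left side, choosing the splitting so that exactly the constraints $\beta \geq \tfrac{3}{\eta} + 2C$ and $\eta \leq \tfrac{1}{C^2}$ make all coefficients on the left non-negative. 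This yields \eqref{A26}. Integrating \eqref{A26} in $t$ over $[0,T]$, using Fubini and ${\mathbb E}[{\mathbb E}[\cdot\mid\mathcal G_t]] = {\mathbb E}[\cdot]$, gives $\|\bar Y\|_\beta^2 \leq T\eta\|\bar f\|_\beta^2$. Finally, for the estimate on $(\bar Z,\bar K)$ when $\eta < \tfrac{1}{C^2}$, I return to the Itô identity at $t=0$ (taking full expectation), keep the $\|\bar Z\|_\beta^2 + \|\bar K\|_{\lambda,\beta}^2$ terms on the left instead of absorbing them, bound the right-hand side using \eqref{lip}, Young's inequality and the already-established bound on $\|\bar Y\|_\beta^2$, and rearrange; the coefficient $1 - \eta C^2 > 0$ appears precisely as the leftover factor multiplying $\|\bar Z\|_\beta^2 + \|\bar K\|_{\lambda,\beta}^2$.

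The main obstacle is the non-positivity of the reflection term $-\int \bar Y\,d\bar A$ in the presence of the jump part $A^d$: the continuous part is immediate from the Skorokhod condition, but for $\Delta A^d$ one has to argue at predictable jump times with the left-limit form of the Skorokhod condition $\Delta A^d_t = \Delta A^d_t \mathbf 1_{\{Y_{t^-}=\xi_{t^-}\}}$ and use that $\bar Y_{t^-} \leq 0$ on $\{Y^1_{t^-} = \xi_{t^-}\}$ together with the correct placement of left limits in the Itô jump term $\Delta(e^{\beta t}\bar Y_t^2)$; once this sign is secured the rest is the routine Gronwall/Young bookkeeping calibrated to the stated choice of $\eta,\beta$.
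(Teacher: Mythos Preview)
Your approach is correct and is precisely the standard argument the paper has in mind: the paper's own ``proof'' simply refers to Proposition~6.1 in \cite{DQS5} and leaves the details to the reader, and that argument is exactly the It\^o-on-$e^{\beta s}\bar Y_s^2$ computation combined with the Skorokhod sign trick and the Young/Lipschitz bookkeeping you describe.

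One minor point of care: your displayed It\^o identity and the subsequent sentence have a sign inconsistency. The reflection term appearing on the right-hand side is $+2\int_t^T e^{\beta s}\bar Y_{s^-}\,d\bar A_s$ (with left limits, since $A$ may jump), and what you actually prove---and need---is $\bar Y_{s^-}\,d\bar A_s \le 0$, not $\ge 0$; your Skorokhod argument in fact shows exactly this, so the substance is right, only the stated claim should be flipped. With that correction, everything goes through as you outline.
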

The proof is similar to the one given for DRSBDEs in the same framework with default 

(see the proof of Proposition 6.1 in the Appendix in \cite{DQS5}), and left to the reader.

\end{document}